\documentclass[runningheads]{llncs}

\usepackage[T1]{fontenc}
\usepackage[margin=1in]{geometry}
\usepackage{algpseudocode}
\usepackage{xcolor}
\usepackage{graphicx}
\usepackage{xspace}
\usepackage[linesnumbered,ruled,vlined,noend]{algorithm2e}
\NoCaptionOfAlgo
\usepackage{float}
\usepackage{enumitem}
\usepackage{url}
\usepackage{hyperref}
\usepackage{amsmath}
\usepackage{amssymb}
\usepackage{multirow}
\usepackage{booktabs}
\usepackage{comment}

\newcommand{\Con}{\texttt{Con}\xspace}
\newcommand{\defn}[1]{\textbf{\emph{#1}}}
\newcommand{\CollCost}{\mathcal{C}\xspace}
\newcommand{\currWin}{w_{\mbox{\tiny cur}}\xspace}
\newcommand{\poly}[1]{{\texttt{poly}({#1})}}

\newcommand{\tunableparam}{\epsilon\xspace}
\newcommand{\mainAlgorithm}{\textsc{Lowball}\xspace}
\newcommand{\mainAlgorithmAcrynom}{\textsc{Lowball}\xspace}
\newcommand{\whp}{w.h.p.\xspace}

\newif\ifcomments
\commentsfalse

\newif\ifshowproofs
\showproofstrue

\AtBeginDocument{}

\title{Dynamic Wakeup under Costly Collisions\vspace{-8pt}}

\author{Umesh Biswas \and
Maxwell Young}
\authorrunning{U. Biswas and M. Young}
\institute{
Department of Computer Science and Engineering,
Mississippi State University, MS 39762, USA\\
\email{ucb5@msstate.edu, myoung@cse.msstate.edu}
}

\begin{document}
\maketitle

\begin{abstract}
The wakeup problem captures a fundamental symmetry-breaking challenge
among devices sharing a communication channel. We study the dynamic
setting, where packets become active at arbitrary times on a time-slotted
multiple access channel.  In each slot, a transmission succeeds if and only if exactly one packet
transmits; two or more simultaneous transmissions cause a collision. The goal is to obtain a successful
transmission quickly.

Prior work on wakeup has largely focused on the number of slots until
the first success; that is, the latency. However, a collision may incur
substantial additional delay, represented by a per-collision cost
$\CollCost$. We therefore seek to control both latency and the collision
cost of an execution, defined as $\CollCost$ times its number of collisions.

We design and analyze a randomized algorithm for dynamic wakeup,
\mainAlgorithm, without collision detection or
knowledge of the number of packets, $n$. Fix a constant $0<\tunableparam\le1/2$. There is a
constant $K>0$ such that, when
$\CollCost\ge K\lg^{1/\tunableparam}n$, \mainAlgorithmAcrynom has
expected latency
$O(\CollCost^{1/2+\tunableparam}\ln\CollCost)$ and expected collision
cost $O(\sqrt{\CollCost})$. Below this threshold, both expectations
are $O(n\log^{\Theta(1/\tunableparam)}n)$. These guarantees hold
against an adaptive adversary, and the
algorithm succeeds with probability $1$. For algorithms in which each
packet's transmission probability depends only on $\CollCost$ and the
packet's local age, with packets activated together using the same
probability schedule, we prove that the maximum of expected latency
and expected collision cost is $\Omega(\sqrt{\CollCost})$.
\end{abstract}

\noindent\textbf{Keywords:} Wakeup, algorithms for wireless communication, collision cost
\medskip

\section{Introduction}

In the wakeup problem, devices are classified as either \defn{active} ({\it awake}) or \defn{inactive} ({\it asleep}). Active devices have packets ready for transmission over a shared channel, whereas inactive devices only listen. The first successful transmission serves as a “wakeup call”, awakening all inactive devices and resolving the problem \cite{banicescu2024survey}.

Traditionally, the performance of a wakeup algorithm is measured by the number of slots until this first successful transmission, referred to as the \defn{latency}. Achieving small latency is nontrivial because the channel can support at most one successful sender in a slot: if two or more devices transmit simultaneously, a \defn{collision} occurs and all of those transmissions fail. Moreover, the devices operate in a distributed manner, with no {\it a priori} central scheduler or coordinator to assign transmission times.

In this work, we address both latency and the additional delay caused by
collisions in solving the wakeup problem. The per-collision cost
$\CollCost$ represents the delay from one collision. Summing these
charges separately from the slotted execution gives the execution's
collision cost. While existing algorithms give
strong latency guarantees, they may incur a large number of collisions (see
Section~\ref{s:baseline}), and addressing these two metrics in tandem
requires new ideas. In particular, there is a natural tension between them.
An aggressive transmission strategy can improve latency by increasing the
chance that some packet sends, but it can also produce many costly collisions.
On the other hand, a conservative strategy can reduce the delay caused by
collisions, but may postpone the first successful transmission. Thus,
minimizing the number of slots to first success does not necessarily
minimize the accumulated delay caused by collisions.

This tension is even more pronounced in the dynamic wakeup setting. Packets
may become active at arbitrary times and therefore execute according to
different local clocks. Moreover, the packets do not know the number of packets {\boldmath{$n$}}, and without
collision detection they cannot determine whether the current contention is
too low, too high, or already at an appropriate level. Consequently, an
algorithm must regulate contention without direct feedback while remaining
robust to adversarially timed packet activations. This raises the central
question of our work: {\it Can we solve dynamic wakeup without collision
detection while keeping both latency and collision cost small?}
\medskip

\noindent{\bf Why Collision Cost Matters.} In the standard wakeup model, a collision consumes a single slot.
In practice, however, a collision can occupy the channel or trigger
a recovery period that causes substantial additional delay. For
example, in WiFi (IEEE 802.11), the time consumed by a transmission
and recovery from a collision can grow with packet size
\cite{anderton:windowed,anderton:is,biswas2026dynamic,biswas2026softening}.
An algorithm that obtains a success after few contention opportunities
may therefore still incur substantial elapsed delay: even a few costly
collisions can outweigh the savings from using fewer slots.

This concern extends beyond short-range WiFi. On long-distance
wireless links, acknowledgment timeouts must accommodate propagation
delay, increasing the time spent waiting after a failed transmission
\cite{patra2007wildnet}. In intermittently connected mobile networks
\cite{zhang2006routing}, a collision may force the sender to wait for
another communication opportunity. In such settings, treating a
collision as a single slot can substantially understate its effect
on the time to communicate successfully.

We represent this additional delay by the per-collision cost
$\CollCost$. We report latency and the collision cost of the execution
separately, since reducing one can increase the other. Together, these
metrics capture the slots through the first success and the additional
delay represented by the accumulated collision charges.

The per-collision cost need not remain constant as the system grows.
Packet sizes may increase with the amount of control information or
application data they carry, increasing the delay associated with a
collision \cite{anderton:windowed}. Likewise, if additional devices are
accommodated by expanding the physical area served by the shared
channel, longer propagation distances can require larger recovery
timeouts \cite{patra2007wildnet}. We therefore allow $\CollCost$ to vary
with $n$, without assuming a specific dependence.

%For Bluetooth \cite{ghamari2018detailed}, collisions waste both time, in addition to e and resources, resulting in degraded network performance.

%%%%%%%%%%%%%%%%%%%%%%%%%%%%%%%%%%%
%%%%%%%%%%%%%%%%%%%%%%%%%%%%%%%%%%%
%%%%%%%%%%%%%%%%%%%%%%%%%%%%%%%%%%%

\subsection{Model and Notation}\label{sec:model}

We consider a system with {\boldmath{$n$}} devices, for a large and unknown value of $n$; each device has a single packet to send on the shared channel. We consider executions in which at least one packet is activated. Packets have only local clocks, which start at activation, and they do not have identifiers. Going forward, for ease of presentation, we refer only to the packets---rather than devices---committing actions such as sending and listening. \medskip 

\noindent{\bf Communication.} Time is divided into disjoint \defn{slots}, each of which can accommodate a packet transmission. Communication occurs on a \defn{multiple access channel}, which is defined as follows.  In any slot, a packet can either attempt to transmit or listen to the channel. If no packet is transmitted in a given slot, it is \defn{empty}. If only one packet is transmitted, the packet \defn{succeeds}; we often refer to this as a \defn{success}. However, if multiple packets transmit simultaneously, they all fail---this is a \defn{collision}---and they may try again later. There is no scheduler or central authority {\it a priori}.\smallskip

\noindent{\bf No Collision Detection.} There is no mechanism to distinguish between an empty slot and a collision—that is, {\it no collision detection} (no-CD). In many practical systems, collision detection (CD) is unavailable or unreliable. For example, in WiFi networks, received signal strength may offer a weak form of CD, but it is error-prone. Consequently, many prior results address the challenging no-CD setting (see the survey \cite{banicescu2024survey} and related discussion in Section \ref{sec:related-work}).

We emphasize that CD and per-collision cost are separate concerns: collisions can cause additional delay regardless of whether they are detectable. For example, in WiFi, even if devices cannot sense an ongoing collision, the channel remains unavailable for its duration.
\medskip

\noindent{\bf Adversarial Activations.} Packet activation times are controlled by an adaptive adversary. Before each slot $t$, the adversary may choose which packets become active in that slot as an arbitrary function of the entire execution history through the end of slot $t-1$, including all previous packet actions and channel outcomes. The adversary must commit to these activations before any random choices or transmission decisions for slot $t$ are revealed. Thus, its activation decision for slot $t$ may depend on the history through slot $t-1$, but not on random bits or packet actions in slot $t$. After slot $t$ is completed, the information revealed by that slot becomes part of the execution history and may be used by the adversary when determining future activations. This adaptive-adversary convention is consistent with prior work on dynamic wakeup, where packets may be activated at adversarially chosen times \cite{gkasieniec2000wakeup,jurdzinski2002probabilistic,jurdzinski2015cost}, and with closely-related dynamic contention-resolution models that allow adaptive arrivals without advance knowledge of fresh random choices \cite{DeMarco:2017:ASC:3087801.3087831,bender:fully,BenderFGY19,BenderKPY18}.\medskip

\noindent{\bf Metrics.} We measure \defn{latency} by the number of slots
from the first packet activation through the first success. An execution
with no success has infinite latency. Each collision incurs a
\underline{known} \defn{per-collision cost} {\boldmath{$\CollCost$}},
which is fixed throughout the execution and at least a sufficiently large
constant.\footnote{When $\CollCost=O(1)$, an execution's collision cost
is at most a constant times its latency.} To capture worst-case
performance, the adversary chooses $\CollCost$ before execution with
knowledge of $n$,
subject to a polynomial upper bound $\CollCost=O(\poly{n})$ whose degree
is fixed but unknown to the packets. The packets know $\CollCost$, but
are not given $n$ or an estimate of $n$; their behavior may therefore
depend on $\CollCost$.

The \defn{collision cost} of an execution is $\CollCost$ times the
number of collision slots before its first success, counting all
collision slots if it never succeeds. Each such slot is charged once,
not once per transmitting packet. Expected collision cost is the
expectation of this sum; for specified slots, we sum only their charges.
These charges represent additional delay but are accounted for
separately: they neither insert extra slots nor provide feedback, and
packet actions and activations retain their usual slot indices.

We report worst-case expected latency and expected collision cost
separately, over adversaries satisfying the model, with the objective
of minimizing their maximum.
\medskip

\noindent{\bf Notation.} An event holds with high probability (\defn{\whp}) in $x$ if its failure probability is $O(x^{-a})$
for any fixed constant $a>0$, with the constants in the algorithm
chosen accordingly. We use $\lg$ for base-$2$ logarithms and $\ln$
for natural logarithms; in asymptotic bounds, $\log$ may use any
fixed base greater than $1$. We write $\log^c y=(\log y)^c$.

Our algorithm is parameterized by a fixed constant
$0<\tunableparam\le1/2$, chosen before execution. Constants
hidden in asymptotic notation may depend on $\tunableparam$.
For fixed constants $k\ge0$ and $k'>0$ independent of
$\tunableparam$ and $n$, we abbreviate factors of the form
$\log^{k+k'/\tunableparam}n$ as
$\log^{\Theta(1/\tunableparam)}n$.

%%%%%%%%%%%%%%%%%%%%%%%%%%%%%%%%%%%
%%%%%%%%%%%%%%%%%%%%%%%%%%%%%%%%%%%
%%%%%%%%%%%%%%%%%%%%%%%%%%%%%%%%%%%

\subsection{Our Results}

%\umesh{A comparison between the static and dynamic results for Case 1 is required.}

We design and analyze our algorithm, \mainAlgorithm,
in the dynamic wakeup setting. Fix a constant $0<\tunableparam\le1/2$ and choose the chunk-size constant $d$ sufficiently large. Our bounds separate according to the per-collision cost $\CollCost$. When $\CollCost$ exceeds a
polylogarithmic threshold in $n$, \mainAlgorithmAcrynom achieves expected collision cost
$O(\sqrt{\CollCost})$ and expected latency $O(\CollCost^{1/2+\tunableparam}\ln\CollCost)$. Below this threshold,
both expectations are $O(n\log^{\Theta(1/\tunableparam)}n)$.\medskip

\noindent{\bf Upper Bound.} Our upper-bound result is the following.

\begin{theorem}\label{thm:dynamic}
Fix a constant $0<\tunableparam\le1/2$ and choose $d$
sufficiently large. There is a constant $K>0$, depending on
$\tunableparam$, such that against any adaptive adversary, \mainAlgorithmAcrynom solves dynamic wakeup with the following bounds.
\begin{enumerate}[leftmargin=15pt]
    \item \textbf{Large-$\CollCost$ regime.}
    If $\CollCost\ge K \lg^{1/\tunableparam}n$, then the
    expected latency is
    $O(\CollCost^{1/2+\tunableparam}\ln\CollCost)$ and the
    expected collision cost is $O(\sqrt{\CollCost})$.

    \item \textbf{Small-$\CollCost$ regime.}
    If $\CollCost<K\lg^{1/\tunableparam} n$, then the
    expected latency and expected collision cost are both
    $O(n\log^{\Theta(1/\tunableparam)}n)$.
\end{enumerate}
\end{theorem}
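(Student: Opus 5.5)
Since the algorithm \mainAlgorithm has not been spelled out before the statement, I fix a concrete form and prove the theorem for it. Each packet, upon activation, runs a schedule depending only on $\CollCost$ and its local age. It is divided into chunks of $d\ln\CollCost$ slots. In chunk $i$ the packet sends with probability $p_i$ in each slot. The sequence $p_i$ starts near $1/\CollCost$ and rises geometrically, by a factor $\CollCost^{\Theta(\tunableparam)}$ per phase, up to about $1/\sqrt{\CollCost}$. After that it continues as a slow (e.g.\ $1/(\text{age}\cdot\mathrm{polylog})$-type) backoff that guarantees eventual success with probability $1$.

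The analysis rests on the contention $X_t=\sum_{u\text{ active}} p_u(t)$. In every slot,
\[
\Pr[\text{success}]\ge X_t e^{-2X_t}\quad\text{and}\quad \Pr[\text{collision}]\le X_t^2 .
\]
So if $X_t\le 1/\sqrt{\CollCost}$, a collision costs $\CollCost X_t^2\le X_t\sqrt{\CollCost}$ in expectation. This is $\sqrt{\CollCost}$ times the success probability of that slot, up to constants. Summing over slots before the first success, and using that $\sum_t \Pr[\text{success at }t\mid\text{no earlier success}]$ is $O(1)$ in expectation up to the stopping time, gives expected collision cost $O(\sqrt{\CollCost})$ from low-contention slots. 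This charging argument is a supermartingale/optional stopping step, and it holds against an adaptive adversary because activations for slot $t$ are fixed before the coins of slot $t$.

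The main obstacle is slots with high contention $X_t\ge 1/\sqrt{\CollCost}$. These arise because the adversary can activate up to $n$ packets, staggered in time. I would first show that any slot with $X_t\in[1/\sqrt{\CollCost},\,\Theta(1)]$ has constant success probability once it persists over a chunk. I would then bound the expected number of collisions incurred while contention climbs past $1/\sqrt{\CollCost}$. Because each packet's probability grows only by a factor $\CollCost^{\tunableparam}$ per phase, contention can exceed $\Theta(1)$ only if many packets are simultaneously in late phases. For that, their sum of probabilities must have passed through the range $[1/\sqrt{\CollCost},1]$ during a full chunk, and a chunk of length $d\ln\CollCost$ there yields a success \whp\ in $\CollCost$.

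The condition $\CollCost\ge K\lg^{1/\tunableparam}n$ enters exactly here. There are $O(1/\tunableparam)$ phases, and a union bound over the $n$ packets' phase boundaries needs failure probability $\CollCost^{-\Omega(1)}\le 1/\mathrm{poly}(n)$ relative to the growth scale. The phase count $1/\tunableparam$ converts into the requirement that $\CollCost^{\tunableparam}$ dominate $\lg n$. The residual failure events contribute $o(\sqrt{\CollCost})$ because cost is $O(\mathrm{poly}(n)\CollCost)$ on them.

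For latency in the large regime, the first activated packet alone reaches probability $1/\sqrt{\CollCost}$ after $O(1/\tunableparam)$ phases. Each phase takes $\CollCost^{1/2+\tunableparam}\ln\CollCost/\CollCost^{\Theta(\tunableparam)}$-many slots, chosen so that the total expected waiting (about $1/p$ slots at the top level times the $\CollCost^{\tunableparam}$ overshoot granularity, times $\ln\CollCost$ per chunk) is $O(\CollCost^{1/2+\tunableparam}\ln\CollCost)$. Other packets only raise $X_t$, which helps success until the high-contention case, already handled. In the small regime, the union bound fails, so I fall back on the tail backoff. Within $O(n\log^{\Theta(1/\tunableparam)}n)$ slots, the total contention from at most $n$ packets, each decaying like $1/(\text{age}\cdot\mathrm{polylog})$, is forced into a constant window for a constant fraction of slots. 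This gives both latency and, since $\CollCost<K\lg^{1/\tunableparam}n$ is only polylogarithmic, collision cost within that bound. Success with probability $1$ follows because the tail schedule has $\sum_t p(t)=\infty$ while contention stays bounded.
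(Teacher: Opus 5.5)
Your argument does not prove the statement. The theorem is about the specific schedule \mainAlgorithm of Section~\ref{sec:our_algo_static}: start at $1/w_0=2^{-\CollCost^{\tunableparam}}$, use chunks of $m=\lceil d\sqrt{\CollCost}\ln\CollCost\rceil$ slots, run fixed-length growing phases that double the probability up to a value in $(1/2,1]$, and run shrinking phases of $2^k$ chunks that halve it starting from $1$. The schedule you substitute cannot meet the bounds at all. The condition $\CollCost\ge K\lg^{1/\tunableparam}n$ allows $n$ up to $2^{(\CollCost/K)^{\tunableparam}}$, which is superpolynomial in $\CollCost$; in particular $n=2\CollCost$ is allowed once $\CollCost$ is large. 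Suppose the adversary activates $2\CollCost$ packets in slot $1$, each sending with probability about $1/\CollCost$. Then $\Con(1)\approx2$, slot $1$ collides with constant probability, and the expected collision cost is already $\Omega(\CollCost)$ rather than $O(\sqrt{\CollCost})$. Your probabilities then \emph{increase}. For $n$ near the top of the range, a $1/(\text{age}\cdot\mathrm{polylog})$ tail brings the contention down to $O(1)$ only at age $\tilde\Omega(n)=2^{\Omega(\CollCost^{\tunableparam})}$, so the latency bound fails too. Any start of the form $\CollCost^{-O(1)}$ fails the same way. This is exactly where your claim that contention exceeds $\Theta(1)$ only when many packets are in late phases breaks.

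The missing idea is a start of $2^{-\Theta(\CollCost^{\tunableparam})}$. The threshold on $\CollCost$ is used precisely to get $\lg n\le\CollCost^{\tunableparam}/2$, hence $n/w_0\le1/\sqrt{\CollCost}$: all packets still in their first chunk contribute at most $1/\sqrt{\CollCost}$ in total. The threshold does not come from a union bound over phases; indeed $\CollCost^{-\Omega(1)}\le1/\mathrm{poly}(n)$ is false when $\CollCost=\Theta(\lg^{1/\tunableparam}n)$. The tiny start forces $\Theta(\CollCost^{\tunableparam})$ doublings of one chunk each, which is where $g=\Theta(\CollCost^{1/2+\tunableparam}\ln\CollCost)$ genuinely comes from. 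Because contention must climb by factor-$2$ steps from at most $1/\sqrt{\CollCost}$ to at least $1/2$, surviving this phase requires $m$ failed slots with contention in $[1/8,1/2)$, which gives $\Pr(L>g)\le e^{-m/(8e)}$ (Lemma~\ref{lem:first-growing-survival}). Your phase lengths are instead chosen to hit the target, and they conflict with your own chunk length $d\ln\CollCost$.

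Two further steps would fail even with a correct start. First, $d\ln\CollCost$ slots at contention near $1/\sqrt{\CollCost}$ succeed with probability only $O(\ln\CollCost/\sqrt{\CollCost})$, not w.h.p. The paper needs chunks of $m=\Theta(\sqrt{\CollCost}\ln\CollCost)$ slots together with factor-$2$ steps. Then, before the first high-contention slot $t_1$, one has $\Con(t_1)\le2\Con(t)+n/w_0$ for every $t\in[t_1-m,t_1)$. With $\beta\ge8\alpha$ this puts all $m$ of those slots in $\mathbb{G}$, so high contention is reached with probability at most $\CollCost^{-\alpha d/e}$ (Lemma~\ref{lem:collision-high}). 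Jumps by $\CollCost^{\Theta(\tunableparam)}$ let the aggregate contention leap over $\mathbb{G}$ at a single chunk boundary.

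Second, your treatment of long executions ignores adaptively staggered activations, which can keep injecting fresh contention. This covers both the small-$\CollCost$ regime and the ``residual failure events,'' on which the cost is not deterministically $O(\mathrm{poly}(n)\CollCost)$. The claim that contention is ``forced into a constant window for a constant fraction of slots'' is asserted, not shown. The paper's mechanism is structural: growing phases have fixed length while shrinking phases double. Hence for $k\ge k^*$ every completed $S_k$ contains a clean shrinking window of length $(D+2)m$ on which contention is nonincreasing from at least $1$, and that window contains $m$ consecutive good slots (Lemma~\ref{lem:roi-visits}). This gives the continuation bound of Lemma~\ref{lem:latency-continuation}. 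That bound yields the small-$\CollCost$ results directly, and in the large regime, multiplied by $e^{-m/(8e)}$ with $\tunableparam\le1/2$, it contributes only $O(1)$.

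Your charging step for low contention is correct and is essentially Lemma~\ref{lem:collision-low-good}: $\CollCost P_{\mathrm{col}}=O(\sqrt{\CollCost})P_{\mathrm{suc}}$, summed against first-success probabilities.
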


We emphasize that \mainAlgorithmAcrynom does not require $n$ as input
and does not need to determine which case of the theorem applies.

%%%%%%%%%%%%%%
Classical dynamic-wakeup algorithms provide strong latency guarantees in
closely related no-CD models. Jurdzi\'nski and Stachowiak
\cite{jurdzinski2005probabilistic} give an $O(n/\log n)$ latency bound
for fixed error probability when $n$ is unknown. More recently, the
\textsc{DecreaseSlowly} wakeup protocol analyzed by De Marco et
al.~\cite{de2022contention} achieves $O(k)$ wakeup time \whp, where
$k$ is the number of packets activated during the relevant activity
interval, even against an adaptive adversary.

These results focus on obtaining small latency rather than controlling
the cost of collisions. For example, \textsc{DecreaseSlowly} begins
with transmission probability $1/2$. Thus, if two packets begin the
protocol together, there is already a constant probability of collision
in the first transmission round, resulting in $\Omega(\CollCost)$
expected collision cost from that round alone.

Our objective is different: we aim to keep both latency and collision
cost small. In the large-$\CollCost$ regime, \mainAlgorithmAcrynom
achieves expected collision cost $O(\sqrt{\CollCost})$ together with
expected latency $O(\CollCost^{1/2+\tunableparam}\ln\CollCost)$.
\medskip

\noindent{\bf Lower Bound.} We also establish a lower bound for
\defn{age-based batch-fair} algorithms: each packet transmits
independently with a probability that depends only on $\CollCost$
and its local age. In particular, packets activated together use the
same sending probability in each slot. \mainAlgorithmAcrynom belongs
to this class. For an algorithm $A$, let
$\mathcal{L}_A(n,\CollCost)$ and
$\mathcal{K}_A(n,\CollCost)$ denote its worst-case expected latency
and expected collision cost, respectively, over permitted adversaries
activating at most $n$ packets. Define:
$$
\mathcal{M}_A(n,\CollCost)
:=
\max\left\{
\mathcal{L}_A(n,\CollCost),
\mathcal{K}_A(n,\CollCost)
\right\}.
$$

\begin{theorem}\label{thm:dynamic-lower-bound}
For every age-based batch-fair algorithm $A$, every $n\ge2$, and every
$\CollCost$,  $\mathcal{M}_A(n,\CollCost)
\ge
{\sqrt{\CollCost}}/{2}$.
\end{theorem}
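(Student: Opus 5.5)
We use a single adversary: it activates exactly two packets together in the first slot and none afterwards (allowed since $n\ge2$). Both packets follow the same age-based schedule $p_1,p_2,\dots$, where $p_t\in[0,1]$ depends only on $\CollCost$ and $t$, and they transmit independently. The outcome of slot $t$ therefore does not depend on earlier outcomes except through whether a success has already occurred. Writing $s_t=2p_t(1-p_t)$ and $c_t=p_t^2$, slot $t$ is a success with probability $s_t$ and a collision with probability $c_t$, conditioned on no earlier success. Let $S_t=\prod_{j<t}(1-s_j)$ be the probability that no success has occurred before slot $t$. Then
\[
\mathbb{E}[\text{latency}]=\sum_{t\ge1}S_t,
\qquad
\mathbb{E}[\text{collision cost}]=\CollCost\sum_{t\ge1}S_t\,c_t .
\]
If the schedule never succeeds with positive probability, the latency is infinite and the bound is immediate, so we may assume $S_t\to0$.

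The key inequality is $s_t\le 2p_t=2\sqrt{c_t}$. Let $L=\sum_t S_t$ and $X=\sum_t S_t c_t$. Since the first success occurs almost surely,
\[
1=\sum_t S_t s_t\le 2\sum_t S_t\sqrt{c_t}.
\]
Writing $S_t\sqrt{c_t}=\sqrt{S_t}\cdot\sqrt{S_tc_t}$ and applying Cauchy--Schwarz gives
\[
\sum_t S_t\sqrt{c_t}\le\sqrt{LX},
\]
so $1\le 2\sqrt{LX}$, that is, $LX\ge 1/4$. The collision cost is $K=\CollCost X$, so
\[
L\cdot K\ge \frac{\CollCost}{4},
\]
and hence $\max\{L,K\}\ge\sqrt{\CollCost}/2$. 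The worst-case quantities $\mathcal{L}_A$ and $\mathcal{K}_A$ are each at least the corresponding values for this particular adversary, so $\mathcal{M}_A(n,\CollCost)\ge\sqrt{\CollCost}/2$.

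The main point requiring care is justifying $\sum_t S_t s_t=1$, that is, that success occurs with probability $1$; as noted, the infinite-latency case is handled separately. We must also check that the model's requirement that $\CollCost$ be at least a large constant does not interfere. It does not, since the argument holds for every $\CollCost$. The bound also does not depend on which slots count toward latency, because every slot up to and including the first success has weight $S_t$, with the sum starting at the first activation, which is slot $1$. No adaptivity is needed, since a fixed two-packet batch already suffices.
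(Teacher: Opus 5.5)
Your proposal is correct and follows the paper's proof essentially step for step: the same two-packet batch activated in slot $1$, the bound $2p_t(1-p_t)\le 2p_t$ giving $1\le 2\sum_t \Pr(L\ge t)\,p_t$, and Cauchy--Schwarz with the split $\sqrt{a_t}\cdot\sqrt{a_t}\,p_t$ yielding $\mathbb{E}[L]\,\mathbb{E}[X]\ge\CollCost/4$. The only cosmetic difference is that you take the maximum for this particular adversary before passing to the worst case, whereas the paper first proves $\mathcal{L}_A(n,\CollCost)\,\mathcal{K}_A(n,\CollCost)\ge\CollCost/4$ (Lemma~\ref{lem:lb-collision}) and then uses $\mathcal{L}_A\mathcal{K}_A\le\mathcal{M}_A^2$.
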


Consequently, when
$\CollCost\ge K(\lg^{1/\tunableparam} n)$, the maximum of
\mainAlgorithmAcrynom's two expectations is within a factor
$O(\CollCost^\tunableparam\ln\CollCost)$ of the best possible for this
algorithm class. The underlying tradeoff also shows that an
$O(\sqrt{\CollCost})$ worst-case expected collision-cost guarantee
requires $\Omega(\sqrt{\CollCost})$ worst-case expected latency
(Section~\ref{sec:lower-bound}).

The classical $\Omega(n/\log n)$ dynamic-wakeup lower bound of
Jurdzi\'nski and Stachowiak~\cite{jurdzinski2005probabilistic} requires
care when applied here. When $\CollCost$ is held constant while $n$
grows, their construction can be adapted to give an
$\Omega(n/\log n)$ expected-latency lower bound for our algorithm
class. However, the hidden constant and the required threshold on $n$
may depend on $\CollCost$. Because packets know $\CollCost$, their
sending probabilities may change when $\CollCost$ changes, even
though $n$ remains unknown. Thus, this argument does not give an
$\Omega(n/\log n)$ lower bound with a hidden constant independent of
$\CollCost$ when $\CollCost$ varies with $n$, and therefore does not
contradict our large-$\CollCost$ upper bound. We make this
distinction explicit in Section~\ref{sec:lower-bound}.

%%%%%%%%%%%%%%%%%%%%%%%%%%%%%%%%%%%
%%%%%%%%%%%%%%%%%%%%%%%%%%%%%%%%%%%
%%%%%%%%%%%%%%%%%%%%%%%%%%%%%%%%%%%

\section{Related Work}\label{sec:related-work}

% \maxwell{please go ahead and implement this.}
% \umesh{It seems to me that we should cite the following papers:}
% \begin{itemize}
%     \item {De Marco (2005): ``Faster Deterministic Wakeup in Multiple Access
%     Channels.''}
% \end{itemize}

% \umesh{If we consider the similar model, then we should cite the following contention resolution papers as well:}
% \begin{itemize}

%     \item {Yu (2012): ``Dynamic Contention Resolution in Multiple-Access Channels.''}
%     \item {De Marco (2022): ``Contention Resolution Without Collision Detection: Constant Throughput And Logarithmic Energy.''}
%     \item {De Marco (2023): ``Deterministic Non-Adaptive Contention Resolution on a Shared Channel.''}
%     \item {Bender (2020): ``Contention Resolution without Collision Detection.''}
% \end{itemize}

% \noindent{\bf Static and Dynamic Wakeup.}
The wakeup problem has been studied extensively in both static and dynamic settings \cite{chlebus2001randomized,banicescu2024survey}. In the static setting, where all packets are initially active, Willard~\cite{willard:loglog} gives an upper bound of $\lg\lg n+O(1)$ for fair algorithms with collision detection, together with a matching lower bound. Without collision detection, Kushilevitz and Mansour~\cite{kushilevitz1998omega,kushilevitz1993omega} establish an $\Omega(\log n)$ expected-latency lower bound, while Newport~\cite{newport2014radio,newport:radio-journal} gives $\Omega(\log n)$ expected latency and $\Omega(\log^2 n)$ latency \whp for general randomized algorithms. Bar-Yehuda et al.~\cite{bar1992time} provide matching upper bounds of $O(\log n)$ expected latency and $O(\log^2 n)$ latency \whp. Biswas and Young~\cite{biswas2026gentle} study static wakeup with per-collision cost $\CollCost$ and give an algorithm with expected latency $O(\CollCost^{1/2+2\epsilon})$ and expected collision cost $O(\CollCost^{1/2+\epsilon})$ when $\CollCost$ is sufficiently large with respect to $n$; otherwise, both costs are polylogarithmic in $n$.

The dynamic setting is substantially more challenging because packets may become active at arbitrary times and execute according to different local clocks. Gasieniec et al.~\cite{gkasieniec2000wakeup} study wakeup under different synchronization assumptions, including settings in which processors begin their local clocks at different activation times. De Marco et al.~\cite{de2005faster,de2007faster} study deterministic wakeup in the
locally synchronous model, where no global clock is available, processors may
be activated at different times, and the number of processors is unknown.

Jurdziński and Stachowiak~\cite{jurdzinski2015cost} study dynamic
wakeup on the classical multiple-access channel when stations may
become active at arbitrary times and do not have access to a global
clock or collision detection. They establish an
$\Omega(\log^2 n)$ expected-latency lower bound for Las Vegas algorithms, even when stations know the exact value of $n$ and have distinct identifiers, contrasting with the $\Omega(\log n)$ lower bound when all stations begin simultaneously. They also describe a matching $O(\log^2 n)$ Las Vegas upper bound
under the assumption that stations know a polynomial upper bound on $n$. In earlier work, Jurdziński and Stachowiak~\cite{jurdzinski2002probabilistic} study the effects of global clocks, knowledge of $n$, and packet identifiers. Most relevant to our setting, when $n$ is unknown and there is no global clock,
collision detection, or packet identifiers, they give an algorithm with fixed constant error probability $\varepsilon$ and latency $O(n/\log n)$; their nearly linear lower bound also applies in this weaker-information setting.

Wakeup has also been studied under substantial modifications
to the standard communication model. Under the SINR model and assuming knowledge of a
polynomial upper bound on $n$, Jurdziński and
Stachowiak~\cite{jurdzinski2015cost} obtain Monte Carlo and Las Vegas
bounds of $O(\log^2 n/\log\log n)$, while a deterministic algorithm
requires $O(\log^2 n)$ latency~\cite{jurdzinski2013distributed}.
Chlebus et al.~\cite{Chlebus:2016:SWM:2882263.2882514} study dynamic
wakeup over multiple channels and obtain latency
$O(n^{1/b}\ln(1/\varepsilon))$ using $b>1$ channels. \medskip

\noindent{\bf Preliminary work.}
Closely related preliminary results appeared at MUSICAL 2026
\cite{biswas2026gentle} and in the DSN 2026 Doctoral Forum
\cite{biswas2026dynamic}. The MUSICAL paper studies the static setting,
while the two-page DSN Doctoral Forum paper reports an early approach
to dynamic wakeup. The present paper substantially strengthens these results by providing
a full dynamic algorithm and analysis that establish the guarantees of
Theorem~\ref{thm:dynamic}.

Finally, we highlight work on the closely related problem of contention
resolution, where wakeup requires only one successful transmission,
whereas contention resolution requires all active packets to eventually
succeed (see \cite{chlebus2001randomized,banicescu2024survey}). Several
dynamic contention-resolution works consider models quite similar to ours,
including asynchronous arrivals, no collision detection, and limited
knowledge of the number of devices. Yu et al.~\cite{yu2012dynamic} study
adversarial asynchronous arrivals without a global clock or collision
detection, while Bender et al.~\cite{bender2020contention} and De Marco et
al.~\cite{de2022contention} obtain strong throughput guarantees in dynamic
settings without collision detection. Of particular relevance to wakeup,
the \textsc{DecreaseSlowly} protocol analyzed by De Marco et
al.~\cite{de2022contention} achieves $O(k)$ wakeup time \whp under adaptive
activations, but does not seek to control collision cost. De Marco et
al.~\cite{de2023deterministic} further study deterministic non-adaptive
contention resolution under asynchronous activations.

In the setting with nontrivial per-collision cost, Biswas
et al.~\cite{biswas2024softening,biswas2026softening} solve the static
contention-resolution problem with latency and expected collision cost
$\tilde{O}(n\sqrt{\CollCost})$; however, they do not address the dynamic
setting. Moreover, their algorithm assumes collision detection and relies
on this feedback to adjust sending probabilities, which is unavailable in
our model. Anderton et al.~\cite{anderton:windowed} further show that
collisions can significantly degrade the performance of common backoff
algorithms for contention resolution.

\begin{figure}[t]
    \centering
    \includegraphics[
        width=1\linewidth,
        height=0.4\textheight,
        keepaspectratio
    ]{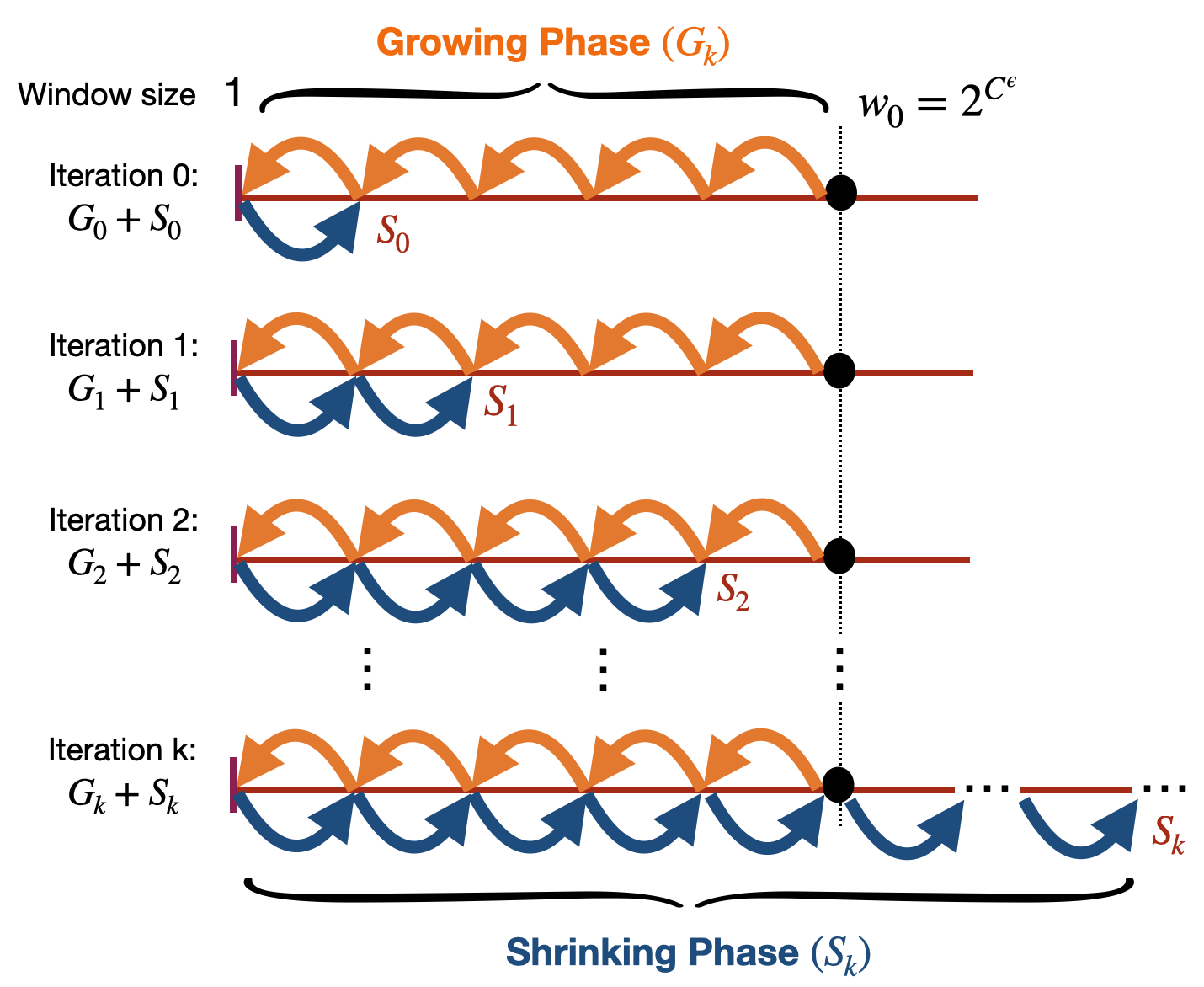}\vspace{-8pt}
    \caption{Illustration of some iterations of \mainAlgorithmAcrynom. Each iteration consists of a fixed-length growing phase $G_k$ (in orange) followed by a shrinking phase $S_k$ (in blue) containing $2^k$ chunks. The top arrows illustrate the common growing-phase schedule, while the bottom arrows illustrate the shrinking phases across successive iterations. Thus, the growing-phase length remains fixed, while the shrinking-phase length doubles across iterations.} %Red, green, and blue indicate the high-, good-, and low-contention regimes, respectively.}
    \label{fig:algorithm}
\end{figure}

\section{Algorithm Specification}\label{sec:our_algo_static}

Set $w_0=2^{\CollCost^\tunableparam}$. We call the algorithm
\mainAlgorithm because it deliberately ``lowballs'' the initial sending
probability, starting each packet at $1/w_0=2^{-\CollCost^\tunableparam}$
to limit the risk of costly early collisions.

The execution proceeds through a sequence of iterations. Each
\defn{iteration} $k$ consists of two phases (as shown in
Figure~\ref{fig:algorithm}): a \defn{growing phase} {\boldmath{$G_k$}}, followed
by a \defn{shrinking phase} {\boldmath{$S_k$}}. These names refer to the batch's
contribution to contention. During $G_k$, each packet's sending
probability increases geometrically, so the batch's contribution to
contention grows. During $S_k$, each packet's sending probability
decreases geometrically, so the batch's contribution to contention
shrinks. The aggregate contention in the system need not be monotone,
since different batches may be in different phases.

More specifically, during $G_k$, the sending probability increases
from $1/w_0$ to a value in $(1/2,1]$. During $S_k$, the sending probability starts at $1$ and gradually decreases. The length of the growing phase is the same
in every iteration, while the length of the shrinking phase doubles
from one iteration to the next.

Each packet divides its local execution into \defn{chunks} of
$m=\lceil d\sqrt{\CollCost}\ln\CollCost\rceil$ consecutive slots,
where $d>0$ is a sufficiently large constant. A packet uses the same
sending probability throughout a chunk, making a fresh independent
transmission decision in each slot. Packets activated at different
times need not have aligned chunks. All phase lengths below are the
full prescribed lengths; the execution stops at the first success.
\smallskip

\noindent{\underline{\it Growing phase.}} For every $k\ge0$, the growing
phase $G_k$ consists of $\lfloor\lg w_0\rfloor+1
=\lfloor\CollCost^\tunableparam\rfloor+1$ chunks. In chunk $i$, for
$i=0,1,\ldots,\lfloor\lg w_0\rfloor$, each packet sends independently
in every slot with probability $2^i/w_0$
(Lines~\ref{alg:second-line-dynamic}--\ref{alg:dynamic-second-sending-probability}).
It doubles after each chunk
(Line~\ref{alg:dynamic-havle-assign}), with final value
$2^{\lfloor\lg w_0\rfloor}/w_0\in(1/2,1]$. Thus, every growing phase
has length:
$$
|G_k|=m\left(\lfloor\lg w_0\rfloor+1\right)
=\Theta\left(\CollCost^{1/2+\tunableparam}\ln\CollCost\right).
$$

\noindent{\underline{\it Shrinking phase.}} For every $k\ge0$, the
shrinking phase $S_k$ consists of $2^k$ chunks. In chunk $j$, for
$j=0,1,\ldots,2^k-1$, each packet sends independently in every slot
with probability $2^{-j}$
(Lines~\ref{alg:while_second_dynamic}--\ref{alg:dynamic-third-sending-probability}).
Thus, the sending probability begins at $1$ and is halved after every
chunk. The length of $S_k$ is:
$$
|S_k|=m2^k=\Theta\left(2^k\sqrt{\CollCost}\ln\CollCost\right).
$$

A packet proceeds through iterations $(G_k,S_k)$, for
$k=0,1,2,\ldots$, until a successful transmission occurs.
\medskip

\noindent{\bf Batches and Contention.} Packets activated at the same
time form a \defn{batch}. Let $b\le n$ be the number of batches
activated during the execution, and let $n_i$ be the size of batch
$i$. Thus, $\sum_{i=1}^{b}n_i\le n$; some packets may still be
inactive at the first success. Packets in the same batch follow the
phase sequence $G_0,S_0,G_1,S_1,\ldots$ in lockstep, beginning at the
batch's activation time.

For any reached global slot $t$, let $p_u(t)$ denote the sending
probability of an active packet $u$. The \defn{contention} in slot
$t$ is $\Con(t):=\sum_u p_u(t)$, where the sum is over the active
packets. Let $f_i(t)$ denote the contention contributed by batch $i$.
We index batches in order of activation, so $f_1$ is the contribution
of the first batch activated. Then $\Con(t)=\sum_{i=1}^{b}f_i(t)$,
where a batch not yet activated contributes $0$. We say that $f_i$
is in a growing or shrinking phase when its packets execute that
phase of \mainAlgorithmAcrynom.

\begin{definition}\label{def:contention-regions}
Fix constants $\alpha\ge1$ and $\beta\ge8\alpha$, and assume
$\CollCost$ is sufficiently large that
$\beta/\sqrt{\CollCost}\le1/2$. The \emph{good contention region} is:
$$
\mathbb{G}=\left[
\frac{\alpha}{\sqrt{\CollCost}},
\frac{\beta}{\sqrt{\CollCost}}
\right].
$$
A slot $t$ has \emph{low contention} if
$\Con(t)<\alpha/\sqrt{\CollCost}$, \emph{good contention} if
$\Con(t)\in\mathbb{G}$, and \emph{high contention} if
$\Con(t)>\beta/\sqrt{\CollCost}$. Every reached slot falls into
exactly one of these categories.
\end{definition}

%%%%%%%%%%%%%%%%%%%%%%%%%%%%%%%%%%%%%%%%%%%

\subsection{Context and Overview}\label{sec:tech-overview}

We now provide intuition for the design of \mainAlgorithmAcrynom and
explain how its growing and shrinking phases control both latency and
collision cost in the dynamic setting.

\subsubsection{\bf A Baseline via Backoff.}\label{s:baseline}
A natural approach to symmetry breaking is to ``back off'' by gradually
decreasing each packet's sending probability. Classical dynamic-wakeup
algorithms obtain strong latency guarantees using such schedules
\cite{jurdzinski2002probabilistic,de2022contention}. For example, the
\textsc{DecreaseSlowly} protocol analyzed by De Marco et
al.~\cite{de2022contention} begins with transmission probability $1/2$.
If two packets begin together, the first transmission round already has
a constant probability of collision and therefore incurs
$\Omega(\CollCost)$ expected collision cost.

Thus, a key obstacle is setting the initial sending probability
suitably low. Since $n$ is unknown, however, it is not clear how low
that probability should be.

\subsubsection{\bf Design Choices and Overview.}
\label{sec:design-choices}
We now explain the choice of initial probability, the role of the
shrinking phases, and why success during the first batch's initial
growing phase is very likely in the large-$\CollCost$ case of
Theorem~\ref{thm:dynamic}.
\medskip

\noindent\textbf{Aiming for Good Contention.}
Suppose a slot has contention $\Con(t)=\Theta(1/\sqrt{\CollCost})$.
Its conditional success probability is
$\Theta(1/\sqrt{\CollCost})$, whereas its collision probability is
$O(1/\CollCost)$. Thus, the slot's conditional expected contribution
to collision cost is only $O(1)$.
More generally, Lemma~\ref{lem:success-good-contention} shows that
encountering
$m=\Theta(\sqrt{\CollCost}\ln\CollCost)$ good-contention slots without
a success has probability at most $\CollCost^{-\alpha d/e}$. This
motivates both the good contention region and the chunk length.
\smallskip

\noindent\textbf{Setting the Initial Sending Probability.}
For one batch of $n$ packets sending with probability $1/w$, the
contention is $n/w$. We choose
$w_0=2^{\CollCost^{\tunableparam}}$, giving the two cases in
Theorem~\ref{thm:dynamic}:
\begin{enumerate}[leftmargin=12pt]
\item \textbf{Large-$\CollCost$ regime:}
$\CollCost\ge K\lg^{1/\tunableparam}n$.
For sufficiently large $K$, this implies
$n/w_0\le1/\sqrt{\CollCost}$
(Lemma~\ref{lem:first-growing-survival}). Thus, both the initial
contention and the contribution of newly activated packets while they
remain in their first chunks are small.

\item \textbf{Small-$\CollCost$ regime:}
$\CollCost<K\lg^{1/\tunableparam}n$.
Here $\CollCost=O(\lg^{1/\tunableparam}n)$. The general latency bound
below, together with the fact that each slot contributes at most
$\CollCost$ to collision cost, gives
$O(n\log^{\Theta(1/\tunableparam)}n)$ for both expectations.
\end{enumerate}
These cases are only for the analysis; \mainAlgorithmAcrynom executes
the same schedule in both.
\medskip

\noindent\textbf{Controlling Long Executions.}
In the dynamic setting, growing phases of other batches can interrupt
the decrease in contention during the first batch's shrinking phases.
However, growing phases have fixed length while shrinking phases become
progressively longer. Once a shrinking phase is sufficiently long, the
execution must contain a long interval in which every active batch is
shrinking. Lemmas~\ref{lem:few-growing}--\ref{lem:roi-visits} show that
each such completed phase contains $m$ consecutive good-contention
slots.

Consequently, completing $j$ sufficiently late shrinking phases without
a success requires surviving at least $jm$ good-contention slots. The
probability of this event decreases geometrically in $j$, yielding:
$$
O\left(
n\CollCost^{1/2+2\tunableparam}\ln\CollCost\,\lg^3 n
\right)
$$
expected remaining latency
(Lemma~\ref{lem:latency-continuation}). This bound applies in both cases
of the main theorem: it gives the small-$\CollCost$ guarantee and
controls rare long executions in the large-$\CollCost$ case.
\medskip

\noindent\textbf{Success During the First Growing Phase.}
Let $L$ denote the latency.  Suppose $\CollCost\ge K\lg^{1/\tunableparam}n$, and let $g$ be the
length of the first batch's initial growing phase. Through slot $g$,
every activated packet remains in its own initial growing phase. If no
success occurs, contention must eventually rise from at most
$1/\sqrt{\CollCost}$ to at least $1/2$. The first such crossing is
preceded by $m$ slots with contention in $[1/8,1/2)$.

Each of these slots has a constant conditional success probability.
Therefore, we can show (Lemma~\ref{lem:first-growing-survival}):
$$
\Pr(L>g)
\le
\exp\left(-\frac{m}{8e}\right).
$$
We initially charge
$g=O(\CollCost^{1/2+\tunableparam}\ln\CollCost)$ slots, while any
remaining duration is incurred only on this exponentially unlikely
event. Combining this probability with
Lemma~\ref{lem:latency-continuation} shows that executions continuing
beyond $g$ contribute only $O(1)$ additional slots in expectation.
Hence, we have (Lemma~\ref{lem:latency-large-cost}):
$$
\mathbb{E}[L]
=
O\left(
\CollCost^{1/2+\tunableparam}\ln\CollCost
\right).
$$
\noindent\textbf{Handling High Contention.}
Lemma~\ref{lem:collision-low-good} shows that low- and
good-contention slots contribute only $O(\sqrt{\CollCost})$ expected
collision cost in total.

For high contention in the large-$\CollCost$ case, we again split the
execution at slot $g$. Reaching high contention by then requires
surviving $m$ good-contention slots, an event of probability at most
$\CollCost^{-\alpha d/e}$. After $g$, the probability that the execution
continues is at most $\exp(-m/(8e))$. These probabilities are small
enough to offset the possible collision cost incurred in the
corresponding executions. Thus, high-contention slots contribute only
$O(1)$ expected collision cost
(Lemma~\ref{lem:collision-high}), giving expected collision cost over the execution
$O(\sqrt{\CollCost})$.
\medskip

\noindent\textbf{Relation to Backoff.}
Our algorithm uses familiar backoff-and-backon probability updates, but
the central challenge is to organize them around both latency and
collision cost. Rather than seek a constant success probability per
slot, we target contention of order $1/\sqrt{\CollCost}$, balancing a
success probability of order $1/\sqrt{\CollCost}$ against a collision
probability of only $O(1/\CollCost)$. In the large-$\CollCost$ regime,
the low initial sending probability keeps costly contention unlikely
during the first batch's initial growing phase, while the increasingly
long shrinking phases control executions that survive that phase. This
gives bounds on both expected latency and expected collision cost.
\medskip

\noindent{\bf Role of {\boldmath{$\tunableparam$}}.}
The parameter $\tunableparam$ is a fixed constant chosen before
execution, with $0<\tunableparam\le1/2$. It controls the initial
window $w_0=2^{\CollCost^\tunableparam}$. Taking a smaller value
shortens each growing phase and improves the large-$\CollCost$
latency bound, but raises the exponent $1/\tunableparam$ in the
threshold $K\lg^{1/\tunableparam}n$ and the polylogarithmic exponent
in the small-$\CollCost$ bound.

The upper limit $\tunableparam\le1/2$ comes from controlling
executions that survive the first batch's initial growing phase.
Their probability is at most $\exp(-m/(8e))$
(Lemma~\ref{lem:first-growing-survival}), whereas the bound on
expected remaining latency contains a factor of $n$
(Lemma~\ref{lem:latency-continuation}). In the large-$\CollCost$
regime, our choice of $K$ gives
$n\le2^{\CollCost^\tunableparam/2}$. The survival probability must
decrease fast enough to offset this factor and the remaining
polynomial and logarithmic factors. Since
$m=\Theta(\sqrt{\CollCost}\ln\CollCost)$, the restriction
$\tunableparam\le1/2$ ensures
$\CollCost^\tunableparam\le\sqrt{\CollCost}$, so the exponential
decay dominates. This comparison is used in
Lemmas~\ref{lem:latency-large-cost} and~\ref{lem:collision-high}.
For any fixed $\tunableparam>1/2$, however,
$\CollCost^\tunableparam$ eventually grows faster than
$\sqrt{\CollCost}\ln\CollCost$, so this comparison no longer gives
the required bound.

{\fontsize{9.5}{12}\selectfont
\begin{algorithm}[t!]
\caption{\bf \mainAlgorithm}
\label{our-dynamic-algoritm-LE}

\BlankLine
\tcp*[h]{Initialization}\\
$w_{0} \leftarrow 2^{\CollCost^{\tunableparam}}$
\label{alg:first-line-dynamic}\\
$m \leftarrow \lceil d\sqrt{\CollCost}\ln{\CollCost} \rceil$
\label{alg:assign-m}\\
$k \leftarrow 0$
\label{alg:assign-k}\\

\medskip
\medskip

\tcp*[h]{Iterations}\\
\While{true}{
\medskip

\tcp*[h]{Growing Phase $G_k$}\\
$\currWin \leftarrow w_{0}$
\label{alg:second-line-dynamic}\\

\While{$\currWin \geq 1$}{
    \label{alg:dynamic-first-while-loop}

    \For{slot $i=1$ \KwTo $m$}{
        \label{alg:dynamic-first-while-for-loop}
        \vspace{-2pt}
        Send packet with probability $1/\currWin$
        \label{alg:dynamic-second-sending-probability}\\

        \If{success}{
            Terminate
        }
    }

    $\currWin \leftarrow \currWin/2$
    \label{alg:dynamic-havle-assign}\\
}

\medskip
\medskip

\tcp*[h]{Shrinking Phase $S_k$}\\
$\currWin \leftarrow 1$\\

\For{$s=0$ \KwTo $2^k-1$}{
    \label{alg:while_second_dynamic}

    \For{slot $i=1$ \KwTo $m$}{
        \label{alg:dynamic-second-while-for-loop}
        \vspace{-2pt}
        Send packet with probability $1/\currWin$
        \label{alg:dynamic-third-sending-probability}\\

        \If{success}{
            Terminate
        }
    }

    $\currWin \leftarrow 2\currWin$
    \label{alg:dynamic-double-assign}\\
}

$k \leftarrow k+1$
}

\end{algorithm}
}

\section{Upper-Bound Analysis}\label{sec:dynamic}

In this section, we analyze the performance of \mainAlgorithmAcrynom for the dynamic wakeup problem.
\ifshowproofs\else
Due to space constraints, proofs of several  lemmas are deferred to the appendix, which contains the full version of the paper.
\fi

\subsection{Preliminaries and Helper Lemmas}

For our analysis to handle an adaptive adversary, we need to define the notion of a history. For each slot $s$, let $A_s$ denote the set of packets activated in
slot $s$, let $T_s$ denote the set of packets that transmit in that
slot, and let $Y_s$ denote the resulting channel outcome.  A
\defn{pre-transmission history for slot} {\boldmath{$t$}} is a tuple:
$$h_t=(A_1,T_1,Y_1,\ldots,A_{t-1},T_{t-1},Y_{t-1},A_t).$$
Thus, {\boldmath{$h_t$}} records the entire execution through the end of slot
$t-1$, together with the adversary's activation decision for slot
$t$, but not the transmission decisions in slot $t$.  For any reachable history $h_t$, the set of active packets and their
sending probabilities in slot $t$ are fixed.

\begin{lemma}\label{lem:bender} (Bender et al.~\cite{bender:fully})
Fix any reachable pre-transmission history $h_t$. Suppose that,
conditioned on $h_t$, each active packet transmits independently in
slot $t$ with probability at most $1/2$. Then:
$$
\Pr(\text{success in slot }t\mid h_t)
\ge
\frac{\Con(t)}{e^{2\Con(t)}}.
$$
\end{lemma}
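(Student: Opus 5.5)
Conditioned on $h_t$, the set of active packets $u$ and their probabilities $p_u:=p_u(t)\le 1/2$ are fixed, and the transmissions are independent Bernoulli trials. The plan is to write the success probability exactly and then bound each factor from below by an exponential. Exactly one packet transmits with probability
$$
\Pr(\text{success in slot }t\mid h_t)=\sum_u p_u\prod_{v\ne u}(1-p_v)
\ge \sum_u p_u\prod_{v}(1-p_v)
=\Con(t)\prod_v(1-p_v).
$$
The inequality holds because we only multiply each term by the extra factor $1-p_u\le 1$.

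The key step is the elementary inequality $1-x\ge e^{-2x}$ for $x\in[0,1/2]$. To prove it, set $\phi(x)=\ln(1-x)+2x$. Then $\phi(0)=0$, and $\phi$ is concave on $[0,1/2]$. Also $\phi(1/2)=2\cdot\tfrac12-\ln 2=1-\ln 2>0$. A concave function that is nonnegative at both endpoints of an interval is nonnegative on the whole interval, so $\phi\ge0$ on $[0,1/2]$. This is exactly where the hypothesis $p_u\le 1/2$ enters.

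Applying the inequality to every factor gives $\prod_v(1-p_v)\ge \exp\bigl(-2\sum_v p_v\bigr)=e^{-2\Con(t)}$. Substituting this into the display yields $\Pr(\text{success in slot }t\mid h_t)\ge \Con(t)e^{-2\Con(t)}$, which is the claimed bound.

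The one point needing care is the adaptive adversary. Since $h_t$ already includes $A_t$, the activation decision for slot $t$ is fixed by the conditioning. The transmission decisions in slot $t$ are fresh independent coins, so the product formula is valid under this conditioning. I do not expect any step to be a real obstacle; the only nontrivial ingredient is the bound $1-x\ge e^{-2x}$ on $[0,1/2]$.
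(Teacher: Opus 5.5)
Your proof is correct: the exact success probability $\sum_u p_u\prod_{v\ne u}(1-p_v)$, the step that drops the factor $1-p_u\le1$, and the bound $1-x\ge e^{-2x}$ on $[0,1/2]$ together give exactly $\Con(t)e^{-2\Con(t)}$, and your concavity argument proves that bound cleanly. The paper gives no proof of its own and simply imports the lemma from Bender et al.; your argument is the standard self-contained derivation of that fact.
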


\begin{lemma}\label{lem:success-good-contention}
For every integer $r\ge 1$, the probability that \mainAlgorithmAcrynom encounters at
least $r$ good-contention slots without a successful transmission
in any of the first $r$ such slots is at most:
$$
\left(1-\frac{\alpha}{e\sqrt{\CollCost}}\right)^r.
$$
\end{lemma}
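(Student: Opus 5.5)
We argue by a martingale-style conditioning on pre-transmission histories, which is what lets the bound hold against the adaptive adversary. The first step is a per-slot bound. Fix any reachable pre-transmission history $h_t$ for which slot $t$ has good contention, that is, $\Con(t)\in\mathbb{G}$. Conditioned on $h_t$, the active packets transmit independently. Each transmits with probability at most $\Con(t)\le\beta/\sqrt{\CollCost}\le1/2$, because an individual sending probability cannot exceed the total contention; this is where the assumption of Definition~\ref{def:contention-regions} is used.

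Lemma~\ref{lem:bender} then gives
\[
\Pr(\text{success in slot }t\mid h_t)\ge \Con(t)e^{-2\Con(t)}.
\]
The function $x\mapsto xe^{-2x}$ is increasing on $[0,1/2]$, so the minimum over $\mathbb{G}$ is at the left endpoint:
\[
\frac{\alpha}{\sqrt{\CollCost}}\,e^{-2\alpha/\sqrt{\CollCost}}\ \ge\ \frac{\alpha}{e\sqrt{\CollCost}},
\]
since $2\alpha/\sqrt{\CollCost}\le\beta/(4\sqrt{\CollCost})\le1$. Set $q:=\alpha/(e\sqrt{\CollCost})$.

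The second step handles the $r$ good slots. Let $\tau_1<\tau_2<\cdots$ be the (random) indices of the good-contention slots reached before any success, and let $E_j$ be the event that at least $j$ good slots are encountered and none of the first $j$ produces a success. We show by induction that $\Pr(E_j)\le(1-q)^j$. The case $j=0$ is trivial.

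For the inductive step, note that $E_j\subseteq E_{j-1}$, and $E_j$ further requires that a $j$-th good slot $\tau_j$ is reached and fails. Decompose according to the pre-transmission history $h_{\tau_j}$ at which the $j$-th good slot occurs. Whether slot $t$ is the $j$-th good slot with no prior success is determined by $h_t$, because $\Con(t)$ and the activations are fixed by $h_t$. Summing over such histories $h_t$ (disjoint events) gives
\[
\Pr(E_j)=\sum_{h_t}\Pr(h_t)\Pr(\text{no success in }t\mid h_t)\le(1-q)\sum_{h_t}\Pr(h_t)\le(1-q)\Pr(E_{j-1}).
\]
The last inequality holds because the events ``$h_t$ occurs, and it makes $t$ the $j$-th good slot with no prior success'' are disjoint and contained in $E_{j-1}$. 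Taking $j=r$ yields the claim.

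The main subtlety is this conditioning. The adversary's activations may depend on past outcomes, so the good slots are stopping-time–like random indices. The point to verify carefully is that ``slot $t$ is good'' is measurable with respect to $h_t$, which the model guarantees since activations for slot $t$ are committed before slot-$t$ randomness. With that in hand, the per-slot bound applies uniformly to every history, and the product bound follows. Executions that never reach $r$ good slots do not contribute to $E_r$, so they need no separate treatment.
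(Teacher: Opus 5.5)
Your proof is correct and follows the paper's argument essentially step for step. You first get a per-slot lower bound of $\alpha/(e\sqrt{\CollCost})$ from Lemma~\ref{lem:bender} under every good-contention history, then the recursion $\Pr(E_j)\le(1-q)\Pr(E_{j-1})$ by averaging over the pre-transmission histories at the $j$-th good slot; your monotonicity argument for $xe^{-2x}$ and the explicit disjoint decomposition over histories just spell out two steps the paper states more tersely.
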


\ifshowproofs
\begin{proof}
Fix any reachable pre-transmission history $h_t$ under which
slot $t$ has good contention. The active packets and their sending
probabilities are fixed by $h_t$, and their transmission decisions
in slot $t$ are independent. By Definition~\ref{def:contention-regions}:
$$
\frac{\alpha}{\sqrt{\CollCost}}
\le\Con(t)\le\frac{\beta}{\sqrt{\CollCost}}\le\frac12.
$$
Every sending probability is at most $\Con(t)$, so
Lemma~\ref{lem:bender} gives:
$$
\Pr(\text{success in slot }t\mid h_t)
\ge\frac{\Con(t)}{e^{2\Con(t)}}
\ge\frac{\alpha}{e\sqrt{\CollCost}}.
$$
Let $p=\alpha/(e\sqrt{\CollCost})$. For $j\ge1$, let $E_j$ be
the event that the execution encounters at least $j$
good-contention slots and remains unsuccessful through the end
of the $j$-th. Let $E_0$ be the certain event.

Reaching the $j$-th good-contention slot requires $E_{j-1}$ to
occur. For every pre-transmission history leading to that slot,
its conditional failure probability is at most $1-p$.
Averaging over those histories gives:
$$
\Pr(E_j)\le(1-p)\Pr(E_{j-1}).
$$
If another success occurs first, or no further good-contention
slot is reached, then $E_j$ does not occur. Iterating the bound
gives, for every integer $r\ge1$:
$$
\Pr(E_r)\le(1-p)^r
=\left(1-\frac{\alpha}{e\sqrt{\CollCost}}\right)^r
$$
as claimed.
\end{proof}
\fi

Our collision-cost analysis uses the following fact. For a
fixed total probability $\sum_i p_i$, the sum of all products of any
fixed number $r$ of distinct probabilities is maximized when the
probabilities are equal.

\begin{lemma}\label{lem:equal-probabilities}
Fix an integer $r\in\{2,\ldots,N\}$ and a value
$\lambda\in[0,N]$. Among all choices of probabilities
$p_1,\ldots,p_N\in[0,1]$ satisfying $\sum_{i=1}^{N} p_i=\lambda,
$, the quantity:
$$
\sum_{\substack{S\subseteq[N]\\ |S|=r}}
\prod_{i\in S}p_i
$$
is maximized when
$p_1=p_2=\cdots=p_N={\lambda/N}$.
\end{lemma}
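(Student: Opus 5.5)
The plan is a smoothing (pairwise-averaging) argument on the elementary symmetric polynomial
$e_r(p)=\sum_{|S|=r}\prod_{i\in S}p_i$.

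First, the maximum exists. The feasible set $\{p\in[0,1]^N:\sum_i p_i=\lambda\}$ is compact and nonempty (the equal point $\lambda/N\in[0,1]$ lies in it). Since $e_r$ is continuous, it attains a maximum on this set.

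Second, the key identity. Fix two indices, say $1$ and $2$, and hold $p_3,\ldots,p_N$ fixed. Splitting subsets by how many of $\{1,2\}$ they contain gives
$$
e_r(p)=p_1p_2\,E_{r-2}+(p_1+p_2)\,E_{r-1}+E_r,
$$
where $E_j$ denotes the $j$-th elementary symmetric polynomial of $p_3,\ldots,p_N$ (with $E_0=1$ and $E_j=0$ when $j$ is out of range). All $E_j\ge0$.

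Third, the averaging step. Replace $p_1,p_2$ by their average $(p_1+p_2)/2$. This keeps each coordinate in $[0,1]$ and preserves both $p_1+p_2$ and the total sum $\lambda$. By AM--GM, $p_1p_2\le\bigl((p_1+p_2)/2\bigr)^2$. Because $E_{r-2}\ge0$, the value of $e_r$ does not decrease. So averaging any pair never hurts.

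Finally, reaching the equal point. This is the only slightly delicate step, since equality in the averaging step may occur when $E_{r-2}=0$, so one cannot argue from strict improvement. I would avoid that issue as follows. Among all maximizers (a compact set, because $e_r$ is continuous), choose one minimizing the continuous function $\sum_i p_i^2$. Suppose some $p_i\ne p_j$. Averaging that pair keeps the point a maximizer, by the third step, and strictly decreases $\sum_i p_i^2$. This contradicts the choice. Hence every coordinate of this maximizer is equal to $\lambda/N$, so the equal point attains the maximum, as claimed.

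An alternative is to repeat the averaging in round-robin order, which converges to the equal vector, and then pass to the limit using continuity of $e_r$. The extremal-choice argument above is cleaner and is the one I would write.
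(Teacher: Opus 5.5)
Your proof is correct and is essentially the paper's: it uses the same split $e_r=p_ip_jE_{r-2}+(p_i+p_j)E_{r-1}+E_r$ and the same observation that averaging a pair preserves the constraint and cannot decrease $e_r$ because $E_{r-2}\ge0$. The only difference is the closing step. The paper repeatedly averages a largest and a smallest coordinate, notes that $\sum_i(p_i-\lambda/N)^2$ drops by $(p_i-p_j)^2/2$ each time so the vector converges to the equal one, and concludes by continuity, which compares the equal point directly with each feasible point without needing a maximizer to exist. Your choice of a maximizer minimizing $\sum_i p_i^2$ reaches the same conclusion through compactness, and is equally rigorous.
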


\ifshowproofs
\begin{proof}
Consider any two probabilities $p_i$ and $p_j$, while holding all
other probabilities fixed. For ease of presentation, define:
$$
R_2=
\sum_{\substack{T\subseteq[N]\setminus\{i,j\}\\ |T|=r-2}}
\prod_{\ell\in T}p_\ell,
$$
$$
R_1=
\sum_{\substack{T\subseteq[N]\setminus\{i,j\}\\ |T|=r-1}}
\prod_{\ell\in T}p_\ell,
$$
and
$$
R_0=
\sum_{\substack{T\subseteq[N]\setminus\{i,j\}\\ |T|=r}}
\prod_{\ell\in T}p_\ell.
$$
Then:
$$
\sum_{\substack{S\subseteq[N]\\ |S|=r}}
\prod_{\ell\in S}p_\ell
=
p_ip_jR_2+(p_i+p_j)R_1+R_0.
$$
Replace $p_i$ and $p_j$ by their average:
$$
p_i'=p_j'=\frac{p_i+p_j}{2}.
$$
Since $p_i'+p_j'=p_i+p_j$, this replacement preserves the constraint
$\sum_{\ell=1}^{N}p_\ell=\lambda$. It also leaves the term
$(p_i+p_j)R_1$ unchanged. Thus, the only term that can change is
$p_ip_jR_2$. Moreover:
$$
p_i'p_j'
=
\left(\frac{p_i+p_j}{2}\right)^2
=
p_ip_j+\frac{(p_i-p_j)^2}{4}
\ge p_ip_j.
$$
Since $R_2\ge0$, averaging $p_i$ and $p_j$ does not decrease the
quantity being maximized.

Now repeatedly apply this operation to a largest and a smallest
probability among $p_1,\ldots,p_N$. Let $\mu=\lambda/N$. Each
averaging step decreases:
$$
\sum_{i=1}^{N}(p_i-\mu)^2
$$
by exactly:
$$
\frac{(p_i-p_j)^2}{2},
$$
where $p_i$ and $p_j$ are the two probabilities being averaged.
Since this sum is nonnegative, the difference between the largest
and smallest probabilities tends to zero. Their average remains
$\mu$, so:
$$
p_1,\ldots,p_N
\longrightarrow
\mu=\frac{\lambda}{N}.
$$
The quantity being maximized never decreases during this process.
Since it is a continuous function of $p_1,\ldots,p_N$, its value
therefore converges to its value at
$p_1=\cdots=p_N=\lambda/N$. Hence, the equal choice has value at
least as large as the original choice and therefore maximizes the
quantity.
\end{proof}
\fi

%%%%%%%%%%%%%%%%%%%%%%%%%%%%%%%%%%%%%%%%%%%%%%%%%%%%%%%%%%%%%%%%%%%%%%%%%
%%%%%%%%%%%%%%%%%%%%%%%%%%%%%%%%%%%%%%%%%%%%%%%%%%%%%%%%%%%%%%%%%%%%%%%%%%
\begin{lemma}\label{lem:collision-probability}
Fix any reachable pre-transmission history under which slot $t$
has low or good contention. The conditional collision probability
$P_{\mathrm{col}}(t)$ satisfies
$P_{\mathrm{col}}(t)=O\left(\Con(t)^2\right)=O(1/\CollCost)$.
\end{lemma}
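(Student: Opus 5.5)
Fix a reachable pre-transmission history $h_t$ under which slot $t$ has low or good contention. Let $p_1,\ldots,p_N$ be the sending probabilities of the $N$ active packets, and write $\lambda=\Con(t)=\sum_i p_i$. Conditioned on $h_t$, these probabilities are fixed and the transmission decisions in slot $t$ are independent. A collision means that at least two packets transmit, so a union bound over pairs gives
$$
P_{\mathrm{col}}(t)\le\sum_{\substack{S\subseteq[N]\\|S|=2}}\prod_{i\in S}p_i .
$$
If $N\le1$, then $P_{\mathrm{col}}(t)=0$ and there is nothing to prove. Otherwise, every $p_i$ lies in $[0,1]$ and $\lambda\le N$, so Lemma~\ref{lem:equal-probabilities} applies with $r=2$. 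The pairwise sum is therefore at most its value when all $p_i=\lambda/N$, which is
$$
\binom{N}{2}\frac{\lambda^2}{N^2}\le\frac{\lambda^2}{2}.
$$

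As a sanity check, the elementary identity $\sum_{i<j}p_ip_j=\bigl(\lambda^2-\sum_i p_i^2\bigr)/2\le\lambda^2/2$ gives the same bound directly. I would mention it, but cite Lemma~\ref{lem:equal-probabilities} to stay consistent with the paper's development.

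This gives $P_{\mathrm{col}}(t)\le\Con(t)^2/2=O(\Con(t)^2)$. Since the slot has low or good contention, Definition~\ref{def:contention-regions} gives $\Con(t)\le\beta/\sqrt{\CollCost}$. Hence
$$
P_{\mathrm{col}}(t)\le\frac{\beta^2}{2\CollCost}=O(1/\CollCost),
$$
because $\beta$ is a constant.

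There is no real obstacle here. The only points needing care are conditioning on $h_t$, so that independence and fixed probabilities are justified under the adaptive adversary, and handling the degenerate case $N\le1$ so that the hypothesis $r\le N$ of Lemma~\ref{lem:equal-probabilities} holds.
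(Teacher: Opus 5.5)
Your proof is correct, and it takes a shorter route than the paper's.

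\textbf{The paper's route.} The paper decomposes the collision event by the exact number $X_t$ of transmitters. For every $r\ge2$ it bounds
\[
\Pr(X_t=r)\le\sum_{|S|=r}\prod_{i\in S}p_i\le\binom{N}{r}\left(\frac{\lambda}{N}\right)^r\le\frac{\lambda^r}{r!},
\]
applying Lemma~\ref{lem:equal-probabilities} once for each $r$. Summing over $r$ gives $P_{\mathrm{col}}(t)\le e^{\lambda}-1-\lambda$. It then needs $\lambda\le1/2$, which comes from taking $\CollCost$ large enough that $\beta/\sqrt{\CollCost}\le1/2$, to conclude that the tail series $\sum_{j\ge0}\lambda^j/(j+2)!$ is bounded by a constant.

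\textbf{Your route.} Your union bound over pairs is equivalent to $\Pr(X_t\ge2)\le\mathbb{E}\bigl[\binom{X_t}{2}\bigr]$. It keeps only the $r=2$ term and gives the explicit bound $P_{\mathrm{col}}(t)\le\lambda^2/2$.
\begin{itemize}
\item This bound is never worse than $e^{\lambda}-1-\lambda$.
\item It holds for every $\lambda$, with no smallness assumption.
\item As you observe, the identity $\sum_{i<j}p_ip_j=\bigl(\lambda^2-\sum_i p_i^2\bigr)/2$ makes Lemma~\ref{lem:equal-probabilities} unnecessary here.
\end{itemize}
The paper's decomposition does also yield per-$r$ bounds on $\Pr(X_t=r)$. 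Nothing later in the paper uses them, so your argument loses nothing and is cleaner.
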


\ifshowproofs
\begin{proof}
Condition on the stated history, and let $\lambda=\Con(t)$.
Suppose $N$ packets are active, with sending probabilities
$p_1,\ldots,p_N$, so $\sum_{i=1}^{N}p_i=\lambda$. These
probabilities are fixed, and the packets transmit independently.
All probabilities below are conditional on this history. If $N<2$,
no collision can occur, so assume $N\ge2$.

Let $X_t$ denote the number of packets that transmit in slot $t$.
For any $r\in\{2,\ldots,N\}$, note that:
\begin{align*}
\Pr(X_t=r)
&=
\sum_{\substack{S\subseteq[N]\\ |S|=r}}
\left(\prod_{i\in S}p_i\right)
\left(\prod_{j\notin S}(1-p_j)\right)\\
&\le
\sum_{\substack{S\subseteq[N]\\ |S|=r}}
\prod_{i\in S}p_i\\
&\le
\binom{N}{r}
\left(\frac{\lambda}{N}\right)^r\\
&\le
\frac{\lambda^r}{r!}.
\end{align*}
The second-to-last line follows from
Lemma~\ref{lem:equal-probabilities}, since the right-hand side is
maximized when all sending probabilities are equal to $\lambda/N$.
The last line follows from
$\binom{N}{r}\le N^r/r!$.

A collision occurs when $X_t\ge2$. So:
\begin{align*}
P_{\mathrm{col}}(t)
&=
\Pr(X_t\ge2)\\
&=
\sum_{r=2}^{N}\Pr(X_t=r)\\
&\le
\sum_{r=2}^{N}\frac{\lambda^r}{r!}\\
&\le
e^\lambda-1-\lambda,
\end{align*}
where the last line follows from the Taylor expansion:
$$
e^\lambda
=
\sum_{r=0}^{\infty}\frac{\lambda^r}{r!}
=
1+\lambda+\sum_{r=2}^{\infty}\frac{\lambda^r}{r!}.
$$
Since $t$ has low or good contention, we have
$\lambda=\Con(t)\le\beta/\sqrt{\CollCost}$. For sufficiently large
$\CollCost$, this implies $\lambda\le1/2$. By Taylor expansion:
$$
e^\lambda-1-\lambda
=
\sum_{r=2}^{\infty}\frac{\lambda^r}{r!}
=
\lambda^2\sum_{j=0}^{\infty}\frac{\lambda^j}{(j+2)!}.
$$
For $\lambda\le1/2$, the remaining series is bounded by a constant,
and hence:
$$
e^\lambda-1-\lambda
=
O(\lambda^2).
$$
Consequently:
$$
P_{\mathrm{col}}(t)
=
O\left(\lambda^2\right)
=
O\left(\Con(t)^2\right),
$$
and since $\Con(t)\le\beta/\sqrt{\CollCost}$, we obtain
$P_{\mathrm{col}}(t)=O(1/\CollCost)$.
\end{proof}
\fi

%%%%%%%%%%%%%%%%%%%%%%%%%%%%%%%%%
%%%%%%%%%%%%%%%%%%%%%%%%%%%%%%%%%
%%%%%%%%%%%%%%%%%%%%%%%%%%%%%%%%%

\subsection{Latency Analysis}

We first use the shrinking phases to bound the expected remaining
latency from any history reached during or immediately after the
first batch's initial growing phase. When
$\CollCost\ge K\lg^{1/\tunableparam}n$, the probability of completing
that phase without a success is at most $\exp(-m/(8e))$.
Combining these bounds gives the sharper expected-latency guarantee
for the large-$\CollCost$ regime.

Let us number the slots from the first activation (thus, we can ignore any prefix of slots where no packets are active). Recall that phase
lengths refer to their full prescribed lengths, without truncation at
a success. Let {\boldmath{$g$}} denote the length of a growing phase. Since each growing phase consists of $\lfloor\lg w_0\rfloor+1$ chunks, each of length $m$, we have $g=m\left(\lfloor\lg w_0\rfloor+1\right)
=
\Theta(
\CollCost^{1/2+\tunableparam}\ln\CollCost
)$

For each $k\ge0$, let {\boldmath{$S_k$}} denote the interval prescribed
for the first batch's $k$-th shrinking phase, so $|S_k|=m2^k$.
Any statement concerning all of $S_k$ applies only when the execution
reaches and completes that phase.

\begin{definition}\label{d:shrinking-phase}
An interval of consecutive slots $J=[\tau,\tau+\ell)\subseteq S_k$
is a \defn{clean shrinking window} if no active batch is in a
growing phase in any slot of $J$. Such a window is \defn{maximal}
if it cannot be extended within $S_k$ while retaining this property.
\end{definition}

\noindent Set:
$$
D=\left\lceil\lg\left({n\sqrt{\CollCost}}/{\beta}\right)\right\rceil+1.
$$
Thus, {\boldmath{$D$}}$=O(\lg n+\lg\CollCost)$.

%%%%%%%%%%%%%%%%%%%%%%%%%%
%%%%%%%%%%%%%%%%%%%%%%%%%%

\begin{lemma}\label{lem:few-growing}
Let $I$ be any interval of $\ell$ consecutive slots that the execution
reaches, with $\ell\ge2g$. For any batch, at most $\lg \ell+O(1)$ of its
growing phases intersect $I$.
\end{lemma}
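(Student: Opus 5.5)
The plan is to use the structure of a single batch's local schedule. Its growing phases are separated by shrinking phases whose lengths double. Within a batch, the phases occur in the order $G_0,S_0,G_1,S_1,\ldots$, with $|G_k|=g$ and $|S_k|=m2^k$. The start of $G_k$ in the batch's local time is therefore
$$
\sigma_k=kg+m(2^k-1),
$$
and consecutive growing phases $G_k$ and $G_{k+1}$ have starting times differing by $g+m2^k$.

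Fix a batch and let $G_a,G_{a+1},\ldots,G_{a+r-1}$ be the growing phases intersecting $I$. These indices are consecutive, since the phases occur in order and $I$ is an interval. If $r\ge3$, then every phase strictly between the first and last intersecting phases lies inside $I$. In particular, $S_{a+r-3}\subseteq I$, because it lies between $G_{a+r-3}$ and $G_{a+r-2}$, and $G_{a+r-2}$ meets $I$. Hence
$$
m2^{a+r-3}\le\ell,\qquad\text{so}\qquad 2^{r-3}\le 2^{a+r-3}\le \ell/m.
$$
This gives $r\le\lg(\ell/m)+3\le\lg\ell+3$, since $m\ge1$. So $r\le\lg\ell+O(1)$, which is the claim. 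The hypothesis $\ell\ge2g$ is not needed for this argument. It only ensures that the $O(1)$ term is comfortable (for instance, that $\lg\ell\ge1$), so the statement stays meaningful.

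The only step that needs care is the containment claim: that a whole shrinking phase of the batch lies within $I$. This uses the fact that the execution reaches $I$, so the prescribed phases are laid out in global time by the batch's activation offset. Intermediate phases are then sandwiched between two phases that both intersect $I$. I would also handle separately the batches activated partway through $I$. For such a batch, only the part of $I$ after its activation counts, and the same sandwich argument applies to that subinterval, which has length at most $\ell$.
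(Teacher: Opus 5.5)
Your proof is correct and uses the same sandwich idea as the paper: the shrinking phases between the first and last intersecting growing phases must fit inside $I$, and because their lengths double, $r=O(\lg\ell)$. The paper sums all $r-1$ intermediate shrinking phases and bounds them by the looser span $\ell+2g\le2\ell$, which is where $\ell\ge2g$ is used; you instead note that an intermediate shrinking phase lies entirely inside $I$, so, as you say, that hypothesis is unnecessary (you could even take $S_{a+r-2}$ for $r\ge2$).
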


\ifshowproofs
\begin{proof}
Suppose $r$ growing phases of the batch intersect $I$. The claim is
immediate if $r\le1$. Otherwise, between these phases lie $r-1$
consecutive shrinking phases. Their total length is at least:
$$
m\left(1+2+\cdots+2^{r-2}\right)
=
m\left(2^{r-1}-1\right).
$$
The first intersecting growing phase can begin before $I$, but by
fewer than $g$ slots. Similarly, the last can end fewer than $g$
slots after $I$. Consequently:
$$
m\left(2^{r-1}-1\right)
\le
\ell+2g
\le
2\ell.
$$
It follows that:
$$
r
\le
1+\lg(1+2\ell/m)
\le
\lg \ell+O(1),
$$
where the last step uses $m\ge1$ and $\ell\ge1$.
\end{proof}
\fi

%%%%%%%%%%%%%%%%%%%%%%%%%%%%%%%%%%%%%%%%%%
%%%%%%%%%%%%%%%%%%%%%%%%%%%%%%%%%%%%%%%%%%

\begin{lemma}\label{lem:window-forces-roi}
Let $J=[\tau,\tau+\ell)$ be a clean shrinking window with
$\ell\ge(D+2)m$ and $1\le\Con(\tau)\le n$. Then $J$ contains
$m$ consecutive slots of good contention.
\end{lemma}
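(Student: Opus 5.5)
The plan is to exploit the fact that inside a clean shrinking window the contention evolves deterministically and decays geometrically, so it must sweep through $\mathbb{G}$. Contention cannot skip over $\mathbb{G}$, because it can at most halve in one slot and the region spans a factor $\beta/\alpha\ge8$. Moreover it stays there for a full chunk length.

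\textbf{Step 1 (structure of $J$).} First I would argue that the set of active packets is the same throughout $J$. A packet activated in some slot of $J$ would begin its $G_0$ there, contradicting cleanness. Next, every active packet remains in a single shrinking phase for all of $J$. If its shrinking phase ended at a slot inside $J$, the packet would enter its next growing phase inside $J$, which again contradicts cleanness. (Packets activated later are irrelevant, and executions that stop at a success only shorten $J$, which is assumed reached.) Within a shrinking phase, a packet's probability is constant on chunks of $m$ slots and halves at each chunk boundary. Hence for any $s,s+m\in J$, each packet $u$ satisfies $p_u(s+m)=p_u(s)/2$, and $p_u$ is nonincreasing on $J$. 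Summing over packets gives
$$
\Con(s+m)=\Con(s)/2,\qquad \Con \text{ nonincreasing on } J,\qquad \Con(s+1)\ge \Con(s)/2 .
$$
The last inequality holds because each packet halves at most once per slot. The step I expect to need the most care is this one, namely checking that chunk misalignment across batches does not matter. It does not, because the $m$-step identity holds packet by packet regardless of where each packet's chunk boundaries lie.

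\textbf{Step 2 (locating the entry into $\mathbb{G}$).} By the $m$-step identity, $\Con(\tau+Dm)=\Con(\tau)2^{-D}\le n2^{-D}\le \beta/(2\sqrt{\CollCost})$, using the definition of $D$. Since $\Con(\tau)\ge1>\beta/\sqrt{\CollCost}$ (recall $\beta/\sqrt{\CollCost}\le1/2$), there is a first slot $s^*\in(\tau,\tau+Dm]$ with $\Con(s^*)\le\beta/\sqrt{\CollCost}$. Minimality of $s^*$ and the one-step bound give
$$
\Con(s^*)\ge \Con(s^*-1)/2>\beta/(2\sqrt{\CollCost}).
$$

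\textbf{Step 3 (a full chunk of good contention).} Consider the slots $s\in[s^*,s^*+m)$. Since $s^*+m\le\tau+(D+1)m<\tau+\ell$, all of these lie in $J$. By monotonicity, $\Con(s)\le\Con(s^*)\le\beta/\sqrt{\CollCost}$. By monotonicity and the $m$-step identity,
$$
\Con(s)\ge\Con(s^*+m)=\Con(s^*)/2>\beta/(4\sqrt{\CollCost})\ge 2\alpha/\sqrt{\CollCost},
$$
where the last inequality uses $\beta\ge8\alpha$. Thus all $m$ consecutive slots in $[s^*,s^*+m)$ have good contention, which proves the lemma. The slack in the hypothesis $\ell\ge(D+2)m$ is more than is needed; $(D+1)m$ would already suffice.
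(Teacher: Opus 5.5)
Your proof is correct and follows the same argument as the paper. Contention is nonincreasing on the clean window; you take the first slot where it drops to at most $\beta/\sqrt{\CollCost}$ and use the one-slot halving bound to get a value above $\beta/(2\sqrt{\CollCost})$; then at most one further halving over the next $m$ slots keeps it above $\beta/(4\sqrt{\CollCost})\ge\alpha/\sqrt{\CollCost}$. Your exact identity $\Con(s+m)=\Con(s)/2$, obtained because each batch stays in a single shrinking phase throughout $J$, slightly sharpens the paper's ``at least $D$ chunk boundaries by $\tau+(D+1)m$'' step, which is why you locate the crossing by $\tau+Dm$. For your closing $(D+1)m$ remark, derive $\Con(s)\ge\Con(s^*)/2$ directly from per-packet halving rather than via $\Con(s^*+m)$, since $s^*+m$ may then lie outside $J$.
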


\ifshowproofs
\begin{proof}
No new batch can be activated during $J$, since it would begin in a
growing phase. All active batches remain in shrinking phases, so
$\Con(t)$ is nonincreasing throughout $J$.

Since $\beta/\sqrt{\CollCost}\le1/2$ (recall Definition \ref{def:contention-regions}), we have 
$\Con(\tau)>\beta/\sqrt{\CollCost}$. Between slots $\tau$ and
$\tau+(D+1)m$, each active batch crosses at least $D$ of its own
chunk boundaries. Each such boundary halves its contribution, so:
$$
\Con\left(\tau+(D+1)m\right)
\le\frac{\Con(\tau)}{2^D}
\le\frac{n}{2^D}
<\frac{\beta}{\sqrt{\CollCost}}.
$$
Let $t$ be the first slot in this interval with
$\Con(t)\le\beta/\sqrt{\CollCost}$. The preceding slot has
$\Con(t-1)>\beta/\sqrt{\CollCost}$. Between two consecutive slots
of $J$, each batch's contribution either stays unchanged or halves.
Therefore, we have that 
$\Con(t)>{\beta}/({2\sqrt{\CollCost}})$.
During the $m$ consecutive slots beginning at $t$, each batch
crosses at most one further chunk boundary. Thus, for every
$s\in[t,t+m)$:
$$
\Con(s)\ge\frac{\Con(t)}2
>\frac{\beta}{4\sqrt{\CollCost}}
\ge\frac{\alpha}{\sqrt{\CollCost}},
$$
where the last inequality uses $\beta\ge8\alpha$. Since every active batch remains in a shrinking phase throughout $J$
and no new batch is activated there, the aggregate contention is
nonincreasing on $J$. Hence, $\Con(s)\le\Con(t)\le{\beta}/{\sqrt{\CollCost}}$. These $m$ slots lie in $J$, since $t\le\tau+(D+1)m$ and
$\ell\ge(D+2)m$, and hence all have good contention.
\end{proof}
\fi

%%%%%%%%%%%%%%%%%%%%%%%%%%%%%%%%%%%%%%%%%%
%%%%%%%%%%%%%%%%%%%%%%%%%%%%%%%%%%%%%%%%%%

\begin{lemma}\label{lem:long-good}
Let $k^*=\lceil\lg n+\tunableparam\lg\CollCost+3\lg\lg n+c\rceil$,
where $c>0$ is a sufficiently large constant. For every $k\ge k^*$,
if the execution reaches the end of $S_k$, then $S_k$ contains a
maximal clean shrinking window of length at least $(D+2)m$.
\end{lemma}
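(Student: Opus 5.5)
The plan is a pigeonhole argument: the growing phases that can intersect $S_k$ are few and each is short, so they cut $S_k$ into a small number of pieces, and one piece must be long.

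\textbf{Step 1 (which slots are dirty).} Call a slot of $S_k$ dirty if some active batch is in a growing phase there. Every dirty slot lies in a growing phase of some batch, and that growing phase intersects $S_k$. Each growing phase has length $g$, so the set of dirty slots is a union of intervals. The maximal clean shrinking windows are exactly the gaps between these intervals inside $S_k$.

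\textbf{Step 2 (count the growing phases).} There are at most $b\le n$ batches. We have $|S_k|=m2^k\ge m2^{k^*}$, and since $2^{k^*}\ge 2^c\,\CollCost^{\tunableparam}$ while $g\le m(\CollCost^{\tunableparam}+1)$, choosing $c$ large gives $\ell:=|S_k|\ge 2g$. Lemma~\ref{lem:few-growing} then applies to $I=S_k$: each batch has at most $\lg\ell+O(1)=k+\lg m+O(1)$ growing phases meeting $S_k$. So the total number $N$ of such phases satisfies
$$
N\le n\bigl(k+\lg m+O(1)\bigr).
$$

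\textbf{Step 3 (pigeonhole).} The dirty slots total at most $Ng$. Removing the corresponding $N$ intervals leaves at most $N+1$ gaps, so some maximal clean shrinking window has length at least
$$
\frac{m2^k-Ng}{N+1}.
$$
It therefore suffices to show
$$
m2^k\ge Ng+(N+1)(D+2)m .
$$
Divide by $m$ and use $g/m=\lfloor\CollCost^{\tunableparam}\rfloor+1$. The requirement becomes roughly
$$
2^k\ge 2n\bigl(k+\lg m+O(1)\bigr)\bigl(\CollCost^{\tunableparam}+D+3\bigr).
$$

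\textbf{Step 4 (verify the inequality; the main obstacle).} The difficulty is that $k$ appears on both sides and $\CollCost$ is unconstrained relative to $n$. I would argue as follows.

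First take $k=k^*$. Then $2^{k^*}\ge 2^c\, n\,\CollCost^{\tunableparam}\lg^3 n$. On the right side:
\begin{itemize}
\item $k^*=O(\lg n+\lg\CollCost)$;
\item $\lg m=O(\lg\CollCost)$;
\item $D=O(\lg n+\lg\CollCost)$.
\end{itemize}
So the right side is $O\bigl(n(\lg n+\lg\CollCost)(\CollCost^{\tunableparam}+\lg n+\lg\CollCost)\bigr)$. Compare factor by factor:
\begin{itemize}
\item The $\lg n$ factors on the right are absorbed by the $\lg^3 n$ on the left.
\item The $\lg\CollCost$ factors need $\lg\CollCost\lesssim\CollCost^{\tunableparam}\lg n$ and similar bounds. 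I would use that $\CollCost=\poly{n}$, so $\lg\CollCost=O(\lg n)$, with the constant absorbed into $c$.
\item The hidden constants, including the unknown polynomial degree, also go into $c$.
\end{itemize}

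Then extend to $k>k^*$. Raising $k$ by one doubles the left side, while the right side grows only through the additive factor $(k+\lg m+O(1))$, which increases by $1$. So the inequality propagates by induction, since $2x\ge x+\text{const}$ once $x$ is large.

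The care needed is in checking that the logarithmic factors, especially $\lg\CollCost$ when $\CollCost$ is large relative to $n$, are dominated. My first attempt would be to rely on $\CollCost=\poly{n}$ to reduce them to $O(\lg n)$ and then take $c$ large.
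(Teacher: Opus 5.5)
Your proposal is correct and follows essentially the same route as the paper. Both arguments apply Lemma~\ref{lem:few-growing} to $S_k$ to bound the number of intersecting growing phases by $O(n(k+\lg m))$, use pigeonhole on the at most $Q+1$ gaps to reduce to the inequality $2^k\ge c_1n(k+\lg m)(\CollCost^\tunableparam+D+2)$, verify it at $k=k^*$ by using $\CollCost=O(\poly{n})$ to turn every logarithmic factor into $O(\lg n)$, and extend to $k>k^*$ by noting that $2^k$ doubles while $k+\lg m$ grows only additively (your induction is the paper's observation that $2^k/(k+\lg m)$ is increasing).
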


\ifshowproofs
\begin{proof}
Let $\ell=|S_k|=m2^k$. Since $g/m\le\CollCost^\tunableparam+1$, the
choice of $k^*$ ensures $\ell\ge2g$. By Lemma~\ref{lem:few-growing},
each batch has at most $k+\lg m+O(1)$ growing phases intersecting
$S_k$. There are at most $n$ batches, so the total number $Q$ of
such phases satisfies $Q=O(n(k+\lg m))$.

These growing phases cover at most $Qg$ slots of $S_k$. Removing
the covered slots leaves at most $Q+1$ maximal clean shrinking
windows. If every such window had length less than $(D+2)m$, then:
$\ell<Qg+(Q+1)(D+2)m$. Dividing by $m$ and using the bounds on $Q$ and $g/m$ would give,
for a fixed constant $c_1>0$:
\begin{equation}\label{eq:contra}
2^k<c_1n(k+\lg m)(\CollCost^\tunableparam+D+2).
\end{equation}
At $k=k^*$, the left-hand side satisfies:
$$
2^{k^*}\ge2^c n\CollCost^\tunableparam\lg^3 n.
$$
On the other hand, $\CollCost=O(\mathrm{poly}(n))$ implies
$\lg m=O(\lg n)$, $D=O(\lg n)$, and $k^*=O(\lg n)$. Since
$\CollCost^\tunableparam\ge1$, the right-hand side of
Equation~\eqref{eq:contra} is:
$$
c_1n(k^*+\lg m)(\CollCost^\tunableparam+D+2)
=O\hspace{-2pt}\left(n\CollCost^\tunableparam\lg^2 n\right).
$$
For sufficiently large $c$ and $n$, this is smaller than $2^{k^*}$.
Moreover, $2^k/(k+\lg m)$ is increasing for $k\ge k^*$, so
Equation~\eqref{eq:contra} cannot hold for any $k\ge k^*$.
Therefore, at least one maximal clean shrinking window in $S_k$
has length at least $(D+2)m$.
\end{proof}
\fi

%%%%%%%%%%%%%%%%%%%%%%%%%%%%%%%%%%%%%%%
%%%%%%%%%%%%%%%%%%%%%%%%%%%%%%%%%%%%%%%

\begin{lemma}\label{lem:good-start}
Let $J=[\tau,\tau+\ell)$ be a maximal clean shrinking window
inside a completed phase $S_k$. Then $1\le\Con(\tau)\le n$.
\end{lemma}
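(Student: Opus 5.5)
The upper bound $\Con(\tau)\le n$ is immediate and I would dispose of it first. Every sending probability of \mainAlgorithmAcrynom is at most $1$, since growing-phase values lie in $[1/w_0,1]$ and shrinking-phase values are $2^{-j}\le1$. At most $n$ packets are ever active, so $\Con(\tau)\le n$ for every reached slot. The work is therefore the lower bound $\Con(\tau)\ge1$, and for this I would identify some active batch whose packets each send with probability exactly $1$ in slot $\tau$.

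I would split on why the maximal window $J$ cannot be extended to the left. There are two cases: either $\tau$ is the first slot of $S_k$, or slot $\tau-1\in S_k$ is not clean.

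In the first case, $\tau$ is the first slot of the first batch's $k$-th shrinking phase. By the algorithm specification, every packet of the first batch sends with probability $2^{-0}=1$ in the first chunk of $S_k$. That batch is nonempty and active at $\tau$, since it was activated at the start of the execution. Hence $\Con(\tau)\ge f_1(\tau)\ge1$.

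In the second case, some batch $i$ is in a growing phase at slot $\tau-1$, where the growing phase may have just begun because of an activation at $\tau-1$. Since $\tau$ is clean, batch $i$ is not in a growing phase at $\tau$, while it is still active (packets never deactivate before the first success, and $J$ lies in a completed phase). Each batch follows $G_0,S_0,G_1,\ldots$, so a growing phase is always followed directly by a shrinking phase. The only way batch $i$ can leave a growing phase between $\tau-1$ and $\tau$ is therefore that $\tau$ is the first slot of some shrinking phase $S_{k'}$ of batch $i$. In that slot its packets send with probability $1$, so $\Con(\tau)\ge f_i(\tau)\ge1$.

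The main obstacle is the case analysis on maximality. I must check that non-extendability "within $S_k$" leaves only these two possibilities, namely a boundary of $S_k$ or a non-clean predecessor. I also need that a newly activated batch at $\tau$ would itself be growing, so it cannot occur inside a clean window. With that in place, both cases reduce to the observation that the first chunk of every shrinking phase uses sending probability $1$.
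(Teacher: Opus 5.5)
Your proposal is correct and follows essentially the same route as the paper's proof. Both dispose of the upper bound trivially, then split on whether $\tau$ is the first slot of $S_k$ (the first batch sends with probability $1$) or maximality forces some batch $i$ to pass from a growing phase at $\tau-1$ to the first slot of a shrinking phase at $\tau$, contributing $n_i\ge1$. Your extra checks, namely that batches stay active through a completed phase and that a growing phase is always followed directly by a shrinking phase, only make explicit what the paper leaves implicit.
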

\ifshowproofs
\begin{proof}
The upper bound follows trivially because at most $n$ packets are active, each with sending probability at most $1$.

If $J$ begins at the start of $S_k$, then the first batch has just
entered its shrinking phase. Its sending probability is $1$, so
$\Con(\tau)\ge f_1(\tau)=n_1\ge1$.

Otherwise, $\tau-1$ also lies in $S_k$. By maximality of $J$, at
least one batch is in a growing phase at $\tau-1$, whereas every
active batch is in a shrinking phase at $\tau$. Hence some batch
$i$ has just entered a shrinking phase at $\tau$. It contributes
$f_i(\tau)=n_i\ge1$, giving the lower bound.
\end{proof}
\fi

%%%%%%%%%%%%%%%%%%%%%%%%%%%%%%%%%%%%%%%
%%%%%%%%%%%%%%%%%%%%%%%%%%%%%%%%%%%%%%%

\begin{lemma}\label{lem:roi-visits}
Let $k^*=\lceil\lg n+\tunableparam\lg\CollCost+3\lg\lg n+c\rceil$,
where $c>0$ is a sufficiently large constant. For every $k\ge k^*$,
if the first batch completes $S_k$, then that phase contains $m$
consecutive slots of good contention.
\end{lemma}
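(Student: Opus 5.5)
This is a direct chaining of Lemmas~\ref{lem:long-good}, \ref{lem:good-start}, and~\ref{lem:window-forces-roi}. Fix $k\ge k^*$ and suppose the first batch completes $S_k$, so the execution reaches the end of $S_k$ with no success. First, Lemma~\ref{lem:long-good} applies with the same $k^*$ (and the same sufficiently large constant $c$), so $S_k$ contains a maximal clean shrinking window $J=[\tau,\tau+\ell)$ with $\ell\ge(D+2)m$. Because the whole phase $S_k$ was completed, every slot of $J$ is reached. The contention $\Con(t)$ is therefore defined on all of $J$.

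Second, apply Lemma~\ref{lem:good-start} to $J$. It is a maximal clean shrinking window inside the completed phase $S_k$, so $1\le\Con(\tau)\le n$. Now $J$ meets both hypotheses of Lemma~\ref{lem:window-forces-roi}: it is a clean shrinking window of length $\ell\ge(D+2)m$, and its starting contention lies in $[1,n]$. That lemma yields $m$ consecutive good-contention slots inside $J$. Since $J\subseteq S_k$, these slots lie in $S_k$, which proves the claim.

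The only point needing care is that the cited lemmas are applied in compatible senses. Lemma~\ref{lem:long-good} asks that the execution reach the end of $S_k$, and Lemma~\ref{lem:good-start} asks that $S_k$ be completed; both follow from the hypothesis. I would also check that maximality in Lemma~\ref{lem:good-start} is taken within $S_k$, which matches Definition~\ref{d:shrinking-phase}. With this checked, no new estimates are needed.
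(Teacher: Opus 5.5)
Your proposal is correct and follows the paper's proof exactly: Lemma~\ref{lem:long-good} supplies a maximal clean shrinking window $J\subseteq S_k$ of length at least $(D+2)m$, Lemma~\ref{lem:good-start} gives $1\le\Con(\tau)\le n$, and Lemma~\ref{lem:window-forces-roi} then yields the $m$ consecutive good-contention slots inside $J$. Your checks on reachability and on maximality being taken within $S_k$ agree with the paper's conventions.
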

\ifshowproofs
\begin{proof}
By Lemma~\ref{lem:long-good}, $S_k$ contains a maximal clean
shrinking window $J=[\tau,\tau+\ell)$ with $\ell\ge(D+2)m$.
Lemma~\ref{lem:good-start} gives $1\le\Con(\tau)\le n$.
Lemma~\ref{lem:window-forces-roi} therefore supplies the required
$m$ good-contention slots inside $J\subseteq S_k$.
\end{proof}
\fi

%%%%%%%%%%%%%%%%%%%%%%%%%%%%%%%%%%%%%%%
%%%%%%%%%%%%%%%%%%%%%%%%%%%%%%%%%%%%%%%

For the remaining latency arguments, let {\boldmath{$L$}} denote the
latency, with $L=\infty$ if no success occurs. 
Let {\boldmath{$B_k$}} be the number of slots prescribed by \mainAlgorithmAcrynom for iterations $0$
through $k$ of the first batch, counting every phase at its full
length. Thus, we have:
\begin{equation}\label{eq:Bk}
\begin{aligned}
B_k
&=(k+1)g+m(2^{k+1}-1)\\
&=O\left(2^k\CollCost^{1/2+\tunableparam}\ln\CollCost\right).
\end{aligned}
\end{equation}
For every $k\ge0$ and every execution, the number of slots actually
executed among the first batch's iterations $0,\ldots,k$ is
$\min\{L,B_k\}$. In particular, if $L>B_k$, then all $B_k$ prescribed
slots of these iterations are completed without a success.

\begin{lemma}\label{lem:latency-continuation}
For every reachable pre-transmission history $h_s$ with
$1\le s\le g+1$, the expected number of remaining slots, including
slot $s$, satisfies $
\mathbb{E}[L-s+1\mid h_s]
=
O(
n\CollCost^{1/2+2\tunableparam}
\ln\CollCost\,\lg^3 n
)$.
\end{lemma}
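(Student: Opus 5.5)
We bound the expected remaining latency by a tail-sum over the first batch's iterations, using Lemma~\ref{lem:roi-visits} to show that each sufficiently late shrinking phase forces $m$ good-contention slots, and Lemma~\ref{lem:success-good-contention} to make surviving many of them geometrically unlikely. Since $s\le g+1\le B_0+1$, the remaining latency from $h_s$ is at most $B_{k^*}$ unless the execution survives past slot $B_{k^*}$. Here $k^*$ is as in Lemma~\ref{lem:roi-visits}. Using \eqref{eq:Bk} and $2^{k^*}=O(n\CollCost^{\tunableparam}\lg^3 n)$, this deterministic prefix contributes
$$
B_{k^*}=O\left(2^{k^*}\CollCost^{1/2+\tunableparam}\ln\CollCost\right)=O\left(n\CollCost^{1/2+2\tunableparam}\ln\CollCost\,\lg^3 n\right),
$$
which is exactly the target bound. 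So the remaining task is to show that the tail beyond $B_{k^*}$ contributes at most the same order.

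For the tail, write
$$
\mathbb{E}[L-s+1\mid h_s]\le B_{k^*}+\sum_{j\ge1}(B_{k^*+j}-B_{k^*+j-1})\Pr(L>B_{k^*+j-1}\mid h_s).
$$
The event $L>B_{k^*+j-1}$ means the first batch completes $S_{k^*},\ldots,S_{k^*+j-1}$ without success. Most of these phases lie after slot $s$; only $S_{k^*}$ could conceivably overlap the conditioning point. Since $s\le g+1$ precedes $S_0$, all of these shrinking phases in fact lie strictly after $s$. By Lemma~\ref{lem:roi-visits}, each completed phase contains $m$ consecutive good-contention slots, all after $s$. Hence the execution encounters at least $jm$ good-contention slots after $s$ without success.

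To bound this conditional probability, I would rerun the argument of Lemma~\ref{lem:success-good-contention} conditioned on $h_s$, counting only good-contention slots at or after $s$. The per-slot bound holds for every pre-transmission history extending $h_s$, so the same iteration gives probability at most
$$
\left(1-\frac{\alpha}{e\sqrt{\CollCost}}\right)^{jm}\le \exp\left(-\frac{\alpha d\,j\ln\CollCost}{e}\right)=\CollCost^{-\alpha dj/e}.
$$
The increment $B_{k^*+j}-B_{k^*+j-1}=g+m2^{k^*+j}$ is at most $2^{j}B_{k^*}$. The tail sum is therefore at most $B_{k^*}\sum_{j\ge1}2^j\CollCost^{-\alpha dj/e}$, which is $O(B_{k^*})$ because $d$ is large and $\CollCost$ is at least a large constant.

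One point needs care: Lemma~\ref{lem:roi-visits} requires that the execution \emph{completes} $S_k$, and its proof (via Lemma~\ref{lem:long-good}) counts the growing phases of all batches, including batches activated adaptively after $s$. Those lemmas are deterministic statements about any execution reaching the end of $S_k$, so they apply regardless of the adversary's adaptivity. I expect the main obstacle to be exactly this: making sure the good-contention slots used in the geometric bound are counted only after $s$, and that the conditioning on $h_s$ is handled cleanly. The fact that every $S_k$ starts after $g\ge s-1$ resolves it.
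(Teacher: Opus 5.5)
Your proposal is correct and follows essentially the same route as the paper. You split off the deterministic prefix $B_{k^*}$, bound $\Pr(L>B_{k^*+j-1}\mid h_s)\le \CollCost^{-j\alpha d/e}$ via Lemma~\ref{lem:roi-visits} and sequential conditioning on histories extending $h_s$, and sum a geometric tail; your estimate $B_{k^*+j}-B_{k^*+j-1}\le 2^jB_{k^*}$ is just a slightly tidier form of the paper's bound $B_{k^*+j}=O(2^j n\CollCost^{1/2+2\tunableparam}\ln\CollCost\,\lg^3 n)$.

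One small slip: $S_0$ begins exactly at slot $g+1$, so $s\le g+1$ need not strictly precede $S_0$. This is harmless, because $k^*\ge1$ already forces $S_{k^*},S_{k^*+1},\ldots$ to start strictly after $s$. Moreover, ``at or after $s$'' would suffice anyway, since $h_s$ excludes the slot-$s$ transmissions.
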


\begin{proof}
Fix such a history $h_s$. All shrinking phases
$S_{k^*},S_{k^*+1},\ldots$ begin at or after $s$.
For $j\ge1$, the event $L>B_{k^*+j-1}$ requires completing the
first $j$ of these phases without a success.
By Lemma~\ref{lem:roi-visits}, these disjoint phases contain at
least $jm$ good-contention slots.

At every reached good-contention slot $t$, conditioned on any
pre-transmission history $h_t$, the active packets transmit
independently with fixed probabilities. Each probability is at most
$\Con(t)\le\beta/\sqrt{\CollCost}\le1/2$.
Lemma~\ref{lem:bender} therefore gives:
$$
\Pr(\text{success in slot }t\mid h_t)
\ge\frac{\Con(t)}{e^{2\Con(t)}}
\ge\frac{\alpha}{e\sqrt{\CollCost}}.
$$

By Lemma~\ref{lem:roi-visits}, completing these $j$ shrinking
phases without a success requires remaining unsuccessful through at
least $jm$ good-contention slots. Conditioned on the pre-transmission
history before any such slot, Lemma~\ref{lem:bender} gives a success
probability of at least $\alpha/(e\sqrt{\CollCost})$. Thus, each time
another good-contention slot is reached, its conditional probability
of failure is at most
$1-\alpha/(e\sqrt{\CollCost})$.

Define the probability $q=\CollCost^{-\alpha d/e}$, where recall $d$ is the chunk-size constant. Conditioning successively on the history before each of the first
$jm$ good-contention slots therefore gives:
$$
\begin{aligned}
\Pr\left(L>B_{k^*+j-1}\mid h_s\right)\hspace{-2pt}
\le
\left(1-\frac{\alpha}{e\sqrt{\CollCost}}\right)^{jm}
\hspace{-9pt}\le
\exp\left(-\frac{\alpha jm}{e\sqrt{\CollCost}}\right)
\le
\CollCost^{-j\alpha d/e}
\hspace{-2pt}=
q^j,
\end{aligned}
$$
where the second inequality uses $1-x\le e^{-x}$ and the third
uses $m\ge d\sqrt{\CollCost}\ln\CollCost$. Grouping the remaining slots by the prescribed endpoints gives:
\begin{align}
\mathbb{E}[L-s+1\mid h_s]
&\le B_{k^*}-s+1
+\sum_{j\ge1}
\left(B_{k^*+j}-B_{k^*+j-1}\right)q^j
\notag\\
&\le B_{k^*}+\sum_{j\ge1}q^j B_{k^*+j}.
\label{eq:latency-sum}
\end{align}
By the definition of $k^*$, $2^{k^*}=O(n\CollCost^\tunableparam\lg^3 n)$.
Thus, Equation~\eqref{eq:Bk} gives, for every $j\ge0$:
$$
B_{k^*+j}
=O\left(2^j n\CollCost^{1/2+2\tunableparam}
\ln\CollCost\,\lg^3 n\right).
$$
Choose $d$ sufficiently large that $q\le1/4$. Then
$\sum_{j\ge0}(2q)^j\le2$, so Equation~\eqref{eq:latency-sum}
implies $\mathbb{E}[L-s+1\mid h_s]
=O\left(n\CollCost^{1/2+2\tunableparam}\ln\CollCost\,\lg^3 n\right)$.
\end{proof}

The preceding lemma controls the rare executions that continue for a
long time. We now show that, in the large-$\CollCost$ regime,
reaching the end of the first batch's initial growing phase without
a success is exponentially unlikely.

\begin{lemma}\label{lem:first-growing-survival}
Assume $\CollCost\ge K(\lg^{1/\tunableparam} n)$, where $K>0$
is a sufficiently large constant depending only on $\tunableparam$.
The probability that \mainAlgorithmAcrynom completes the first batch's initial
growing phase without a success satisfies $\Pr(L>g)\le\exp\left(-{m}/{(8e)}\right)$.
\end{lemma}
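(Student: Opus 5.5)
Throughout slots $1,\ldots,g$, every activated batch is still in its own initial growing phase $G_0$, since a batch activated at slot $s\ge1$ has not finished $G_0$ before slot $s+g-1\ge g$. Hence during $[1,g]$ each batch's contribution is nondecreasing. Within a slot it is constant, and at each of its own chunk boundaries it exactly doubles, except for the first chunk boundary, which starts it at $n_i/w_0$. So $\Con(t)$ is nondecreasing on $[1,g]$. The plan is to show that, on the event $L>g$, there must be $m$ consecutive slots with $\Con\in[1/8,1/2]$. Each such slot succeeds with constant probability by Lemma~\ref{lem:bender}, and the bound then follows by the same sequential-conditioning argument as in Lemma~\ref{lem:success-good-contention}.

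\emph{Step 1 (initial contention is tiny).} From $\CollCost\ge K\lg^{1/\tunableparam}n$ we have $\CollCost^\tunableparam\ge K^\tunableparam\lg n$. Taking $K$ large, this gives $n\le 2^{\CollCost^\tunableparam/2}$ and hence $n/w_0\le 2^{-\CollCost^\tunableparam/2}\le 1/\sqrt{\CollCost}$ for large $\CollCost$. Every packet still in its first chunk has probability $1/w_0$, so newly activated packets add at most $n/w_0$ in total.

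\emph{Step 2 (contention must rise).} At slot $g$ the first batch is in its last growing chunk, with probability in $(1/2,1]$, so $\Con(g)\ge n_1/2\ge1/2$. Let $t_0$ be the first slot with $\Con(t_0)\ge 1/2$. Between consecutive slots, contention changes only because some batch doubles or a new batch arrives. A new batch adds at most $1/\sqrt{\CollCost}\le$ a tiny amount, so $\Con(t_0-1)\ge 1/4-o(1)$. I then look back $m$ slots. Any batch crosses at most one chunk boundary in $m$ slots, so its contribution at least halves going backward. A batch activated within those $m$ slots contributes only its first-chunk mass, which is bounded by Step 1.

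Hence for $s\in[t_0-m,t_0)$ I get $\Con(s)\ge \tfrac12\Con(t_0-1)-n/w_0\ge 1/8$ roughly, and $\Con(s)<1/2$ by minimality of $t_0$. I must also check $t_0-m\ge1$: since $\Con(1)\le1/\sqrt{\CollCost}$ and reaching $1/2$ needs many doublings, each separated by $m$ slots, we get $t_0>m$. The constants need care, using a strict threshold and absorbing the $o(1)$ terms. If they are too tight, I would instead use $[1/8,1/2)$ via the factor-of-two slack: take $t_0$ as the first slot with $\Con\ge1/2$ excluding new-arrival jumps.

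\emph{Step 3 (conclude).} Every sending probability in these slots is at most $\Con(s)\le1/2$. Lemma~\ref{lem:bender} then gives success probability at least $\Con e^{-2\Con}\ge (1/8)e^{-1}=1/(8e)$, conditioned on any pre-transmission history. These slots all lie within $[1,g]$, so $L>g$ requires failing in all $m$ of them. Sequential conditioning as in Lemma~\ref{lem:success-good-contention} gives $(1-1/(8e))^m\le e^{-m/(8e)}$.

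\emph{Main obstacle.} The delicate step is Step 2, namely pinning down $m$ consecutive slots in $[1/8,1/2)$ despite adversarial arrivals and misaligned chunks. The adaptive adversary can choose $t_0$ by its activation choices, but since the conditioning is per slot on histories, adaptivity only affects which slots qualify. To make this work I will define the window via a stopping-time-like rule: the $m$ slots preceding the first crossing.
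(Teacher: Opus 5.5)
Your proposal is correct and follows essentially the paper's route: bound $n/w_0\le 1/\sqrt{\CollCost}$, take the first slot $t_0\le g$ with $\Con(t_0)\ge 1/2$ (necessarily $t_0>m$), and show that the $m$ preceding slots have contention in $[1/8,1/2)$. You then finish with Lemma~\ref{lem:bender} and the counting argument of Lemma~\ref{lem:success-good-contention}, applied to slots whose contention (fixed by the pre-transmission history) lies in $[1/8,1/2)$, rather than conditioning on membership in a window defined in hindsight. The one step to tighten is Step~2: compare each $s\in[t_0-m,t_0)$ directly with $t_0$. Every batch active at $s$ doubles at most once before $t_0$ and later arrivals add at most $n/w_0$, so $1/2\le\Con(t_0)\le 2\Con(s)+1/\sqrt{\CollCost}$ gives $\Con(s)\ge 1/8$ exactly; routing through $t_0-1$ instead lets a batch double twice and only yields $1/8-O(1/\sqrt{\CollCost})$, just short of what Step~3 needs for the constant $1/(8e)$.
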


\begin{proof}
We first bound the contribution of newly activated packets.
Choose $K$ large enough that $K^\tunableparam\ge2$ and
$\lg\CollCost\le\CollCost^\tunableparam$ whenever $\CollCost\ge K$; such a choice is possible because $\tunableparam>0$ is a fixed constant. Since $\CollCost\ge K(\lg^{1/\tunableparam} n)$ and
$n\ge2$, we have $\lg n\ge1$ and hence $\CollCost\ge K$.
Moreover, we have:
$$
\CollCost^\tunableparam
\ge K^\tunableparam\lg n
\ge 2\lg n,
$$
where the last inequality uses our choice $K^\tunableparam\ge2$; thus, $\lg n\le\CollCost^\tunableparam/2$. Using this inequality and recalling that
$w_0=2^{\CollCost^\tunableparam}$, we have:
\begin{align}
\frac{n}{w_0}
=
2^{\lg n-\CollCost^\tunableparam}
\le
2^{-\CollCost^\tunableparam/2}
\le
2^{-(\lg\CollCost)/2}
=
\frac{1}{\sqrt{\CollCost}}.\label{eq:1/sqrtC}
\end{align}

Suppose $L>g$. Through slot $g$, every activated packet is still
in its initial growing phase. The first batch's sending probability
in its final chunk is
$2^{\lfloor\lg w_0\rfloor}/w_0>1/2$.
Thus, there is a first slot $\tau\le g$ with $\Con(\tau)\ge1/2$.
During the first $m$ slots, every activated packet is still in its
initial chunk, so:
$$
\Con(t)\le\frac{n}{w_0}\le\frac{1}{\sqrt{\CollCost}}<\frac12.
$$
Hence, it must be the case that $\tau>m$.

Fix $t\in[\tau-m,\tau)$. Since successive chunk boundaries of
a packet are $m$ slots apart, a packet already active at $t$
doubles its sending probability at most once between slots $t$
and $\tau$. A packet activated after $t$ is still in its first
chunk at $\tau$. Consequently:
$$
\frac12
\le
\Con(\tau)
\le
2\Con(t)+\frac{n}{w_0}
\le
2\Con(t)+\frac{1}{\sqrt{\CollCost}},
$$
where the last inequality follows from Equation \ref{eq:1/sqrtC}. For sufficiently large $\CollCost$, this implies:
$$
\Con(t)
\ge
\frac14-\frac{1}{2\sqrt{\CollCost}}
\ge
\frac18.
$$
Also, $\Con(t)<1/2$ by the choice of $\tau$. Thus, $L>g$
requires remaining unsuccessful through at least $m$ slots
with contention in $[1/8,1/2)$. 
For any slot $t$ with contention in $[1/8,1/2)$, conditioned on
the pre-transmission history $h_t$, every active packet has sending
probability at most $\Con(t)<1/2$. Lemma~\ref{lem:bender} therefore
gives:
$$
\Pr(\text{success in slot }t\mid h_t)
\ge
\frac{\Con(t)}{e^{2\Con(t)}}
\ge
\frac{1}{8e}.
$$
Thus, whenever the execution reaches another slot in this contention
range, conditioned on everything that has happened beforehand, the
probability of remaining unsuccessful in that slot is at most
$1-1/(8e)$. Applying this bound successively to $m$ such slots gives:
$$
\Pr(\text{remain unsuccessful through $m$ such slots})
\le
\left(1-\frac{1}{8e}\right)^m.
$$
Since $L>g$ implies that the execution remains unsuccessful through
at least $m$ such slots:
$$
\Pr(L>g)
\le
\left(1-\frac{1}{8e}\right)^m
\le
\exp\left(-\frac{m}{8e}\right),
$$
where the final inequality uses $1-x\le e^{-x}$.
\end{proof}

\begin{lemma}\label{lem:latency-large-cost}
Assume $\CollCost\ge K(\lg^{1/\tunableparam} n)$, where $K>0$
is a sufficiently large constant depending only on $\tunableparam$.
The expected latency of \mainAlgorithmAcrynom is
$O\left(\CollCost^{1/2+\tunableparam}\ln\CollCost\right)$.
\end{lemma}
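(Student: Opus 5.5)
We split the latency at slot $g$, the end of the first batch's initial growing phase, and write
$$
L\le \min\{L,g\}+(L-g)\mathbf{1}[L>g].
$$
The first term is at most $g=m(\lfloor\lg w_0\rfloor+1)=O(\CollCost^{1/2+\tunableparam}\ln\CollCost)$ deterministically, which is the claimed bound. It then suffices to show that $\mathbb{E}[(L-g)\mathbf{1}[L>g]]=O(1)$, or at least $O(\CollCost^{1/2+\tunableparam}\ln\CollCost)$.

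For the tail term, we condition on the pre-transmission history $h_{g+1}$. The event $\{L>g\}$ is determined by $h_{g+1}$, since it only requires that slots $1,\ldots,g$ contain no success. On any reachable $h_{g+1}$ in this event, Lemma~\ref{lem:latency-continuation} with $s=g+1$ gives
$$
\mathbb{E}[L-g\mid h_{g+1}]=O\left(n\CollCost^{1/2+2\tunableparam}\ln\CollCost\lg^3 n\right).
$$
Averaging over such histories gives
$$
\mathbb{E}[(L-g)\mathbf{1}[L>g]]\le\Pr(L>g)\cdot O\left(n\CollCost^{1/2+2\tunableparam}\ln\CollCost\lg^3 n\right).
$$
By Lemma~\ref{lem:first-growing-survival}, $\Pr(L>g)\le\exp(-m/(8e))\le\CollCost^{-d\sqrt{\CollCost}/(8e)}$, using $m\ge d\sqrt{\CollCost}\ln\CollCost$.

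The remaining step, and the only real obstacle, is to show that this exponential factor absorbs the polynomial ones, in particular the factor $n$. From the proof of Lemma~\ref{lem:first-growing-survival}, the assumption $\CollCost\ge K\lg^{1/\tunableparam}n$ gives $\lg n\le\CollCost^{\tunableparam}/2\le\sqrt{\CollCost}/2$, where the last step uses $\tunableparam\le1/2$. Hence
\begin{itemize}
\item $n\le 2^{\sqrt{\CollCost}/2}$, and
\item $\lg^3 n\le\CollCost^{3/2}$.
\end{itemize}
Therefore the product is at most
$$
\exp\left(-\frac{d\sqrt{\CollCost}\ln\CollCost}{8e}\right)\cdot 2^{\sqrt{\CollCost}/2}\cdot\CollCost^{O(1)}.
$$
For $d$ chosen sufficiently large and $\CollCost\ge K$ large enough that $\ln\CollCost\ge1$, the exponent $-d\sqrt{\CollCost}\ln\CollCost/(8e)$ dominates both $(\ln 2)\sqrt{\CollCost}/2$ and $O(\ln\CollCost)$. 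The product is thus $o(1)$, and in particular $O(1)$.

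Adding the two parts gives $\mathbb{E}[L]=O(\CollCost^{1/2+\tunableparam}\ln\CollCost)$. One care point is that Lemma~\ref{lem:latency-continuation} must be applied conditionally on $h_{g+1}$ rather than unconditionally, so that it combines with the survival probability. This is legitimate because that lemma holds uniformly over all reachable histories with $s\le g+1$.
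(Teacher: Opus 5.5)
Your proposal is correct and follows essentially the same argument as the paper. You split at slot $g$, apply Lemma~\ref{lem:latency-continuation} conditionally on $h_{g+1}$, and multiply by the survival bound of Lemma~\ref{lem:first-growing-survival}. You then use $\lg n\le\CollCost^{\tunableparam}/2$ together with $\tunableparam\le1/2$ so that the $\exp(-\Theta(\sqrt{\CollCost}\ln\CollCost))$ factor absorbs $n$ and the remaining polynomial factors. The only cosmetic difference is that you coarsen $\CollCost^{\tunableparam}$ to $\sqrt{\CollCost}$ immediately, whereas the paper keeps $\CollCost^{\tunableparam}$ until the final exponent comparison.
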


\begin{proof}
Choose $K$ large enough to satisfy the requirements of
Lemma~\ref{lem:first-growing-survival} and
$K^\tunableparam\ge2$. Executions ending by slot $g$ use, of course, at most $g$ slots. For every
pre-transmission history $h_{g+1}$ reached after surviving the
first growing phase, Lemma~\ref{lem:latency-continuation} gives:
$$
\mathbb{E}[L-g\mid h_{g+1}]
=
O\left(
n\CollCost^{1/2+2\tunableparam}
\ln\CollCost\,\lg^3 n
\right).
$$
Averaging over these histories and applying
Lemma~\ref{lem:first-growing-survival}, we obtain:
\begin{align}
\mathbb{E}[L]
&\le
g+\Pr(L>g)\,
O\left(
n\CollCost^{1/2+2\tunableparam}
\ln\CollCost\,\lg^3 n
\right)\nonumber\\
&\le
g+\exp\left(-\frac{m}{8e}\right)
O\left(
n\CollCost^{1/2+2\tunableparam}
\ln\CollCost\,\lg^3 n
\right)\label{eqn:exp-L}.
\end{align}
Since
$\CollCost\ge K(\lg^{1/\tunableparam} n)$,
we have:
$$
\CollCost^\tunableparam
\ge
K^\tunableparam\lg n
\ge
2\lg n.
$$
This implies that $\lg n\le \frac{\CollCost^\tunableparam}{2}$, which in turn implies $n
\le
2^{\CollCost^\tunableparam/2} = \exp\left(
\frac{\ln 2}{2}\CollCost^{\tunableparam}
\right)$. Similarly, observe that:
$$
\lg^3 n
\le
\frac{1}{8}\CollCost^{3\tunableparam}
=
O\left(\CollCost^{3\tunableparam}\right).
$$
We then have that:
\begin{align*}
&\exp\left(-\frac{m}{8e}\right)
O\left(
n\CollCost^{1/2+2\tunableparam}
\ln\CollCost\,\lg^3 n
\right)\\
&=
O\left(
\exp\left(-\frac{m}{8e}\right)
\exp\left(
\frac{\ln 2}{2}\CollCost^{\tunableparam}
\right)
\CollCost^{1/2+5\tunableparam}
\ln\CollCost
\right)\\
&=O\left(
\exp\left(
\frac{\ln 2}{2}\CollCost^{\tunableparam}
-
\frac{d}{8e}\sqrt{\CollCost}\ln\CollCost
\right)
\CollCost^{1/2+5\tunableparam}
\ln\CollCost
\right).
\end{align*}
where the last line follows by plugging in $m\ge d\sqrt{\CollCost}\ln\CollCost$. Since $0<\tunableparam\leq 1/2$, we have
$\CollCost^\tunableparam\le\sqrt{\CollCost}$ and for
sufficiently large $\CollCost$, we have:
$$
\frac{\ln2}{2}\CollCost^\tunableparam
-
\frac{d}{8e}\sqrt{\CollCost}\ln\CollCost
\le
-\frac{d}{16e}\sqrt{\CollCost}\ln\CollCost.
$$
The corresponding exponentially-small term kills the 
$\CollCost^{1/2+5\tunableparam}\ln\CollCost$ factor, so the expected
contribution after slot $g$ in Equation \ref{eqn:exp-L} is $O(1)$, and so we have shown:
$$
\mathbb{E}[\text{latency}]
\le
\mathbb{E}[L]
\le
g+O(1)
=
O\left(
\CollCost^{1/2+\tunableparam}\ln\CollCost
\right),
$$
as claimed.
\end{proof}

%%%%%%%%%%%%%%%%%%%%%%%%%%%%%%%%%%%%%%%%%%%%%%%%%%%%%
%%%%%%%%%%%%%%%%%%%%%%%%%%%%%%%%%%%%%%%%%%%%%%%%%%%%%

\subsection{Collision Cost Analysis}

We first bound the collision cost contributed by low- and
good-contention slots over the entire execution. For high-contention slots, we split the execution
at the end of the first batch's initial growing phase. Reaching high
contention during that phase is unlikely, and the latency analysis
already bounds both the probability of surviving the phase and the
expected duration of any continued execution. As in the latency analysis, $L$ denotes the slot of the first success
and $g=m(\lfloor\lg w_0\rfloor+1)$ is the full prescribed length of
the first growing phase.  

\begin{lemma}\label{lem:collision-low-good}
The expected collision cost contributed by slots with low or good
contention is $O(\sqrt{\CollCost})$.
\end{lemma}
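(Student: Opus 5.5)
The plan is a charging argument: in every low- or good-contention slot, I charge the conditional collision probability against the conditional success probability of the same slot. The latter probabilities, summed over reached slots, can total at most $1$.

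Fix a reachable pre-transmission history $h_t$ under which slot $t$ has low or good contention, and write $\lambda=\Con(t)\le\beta/\sqrt{\CollCost}\le 1/2$.
\begin{itemize}
\item By Lemma~\ref{lem:collision-probability}, $P_{\mathrm{col}}(t)\le c_0\lambda^2$ for some constant $c_0$.
\item Every sending probability is at most $\lambda\le 1/2$, so Lemma~\ref{lem:bender} applies. It gives $\Pr(\text{success in slot }t\mid h_t)\ge \lambda e^{-2\lambda}\ge \lambda/e$.
\item Combining the two bounds gives
$$
P_{\mathrm{col}}(t)\le c_0 e\,\lambda\cdot\Pr(\text{success in slot }t\mid h_t)\le \frac{c_0 e\beta}{\sqrt{\CollCost}}\Pr(\text{success in slot }t\mid h_t).
$$
\item If $\lambda=0$, both sides vanish, so the inequality holds trivially.
\end{itemize}

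Next I sum over slots. Let $R_t$ be the event that slot $t$ is reached, meaning no success occurred before $t$, and that $t$ has low or good contention. This event is determined by $h_t$. By linearity of expectation and the tower property over pre-transmission histories, the expected number of collisions among such slots is
$$
\sum_{t\ge1}\mathbb{E}\left[\mathbf{1}_{R_t}P_{\mathrm{col}}(t)\right]
\le\frac{c_0e\beta}{\sqrt{\CollCost}}\sum_{t\ge1}\mathbb{E}\left[\mathbf{1}_{R_t}\Pr(\text{success in slot }t\mid h_t)\right].
$$
The term $\mathbb{E}[\mathbf{1}_{R_t}\Pr(\text{success in }t\mid h_t)]$ is at most the probability that the first success occurs exactly in slot $t$. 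These events are disjoint across $t$, so the right-hand sum is at most $\Pr(\text{some success})\le 1$. Multiplying by $\CollCost$ bounds the expected collision cost of low- and good-contention slots by $c_0e\beta\sqrt{\CollCost}=O(\sqrt{\CollCost})$.

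The step needing the most care is making the summation rigorous under the adaptive adversary and over a possibly unbounded execution. Three points have to be checked:
\begin{itemize}
\item The adversary fixes activations for slot $t$ before the slot's coins are drawn. Hence $h_t$ determines the active set and the probabilities, and both conditional bounds hold pointwise in $h_t$.
\item Collisions after the first success are not charged. This is why the indicator includes "reached".
\item Interchanging the infinite sum with expectation is justified by monotone convergence, since all terms are nonnegative.
\end{itemize}
The argument needs neither the large-$\CollCost$ assumption nor the structure of the phases.
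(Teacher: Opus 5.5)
Your proposal is correct and follows essentially the same route as the paper's proof. Both arguments bound the conditional collision probability by $O(\lambda^2)$ using Lemma~\ref{lem:collision-probability}, charge it to $O(1/\sqrt{\CollCost})$ times the conditional success probability from Lemma~\ref{lem:bender}, and sum over reachable histories and slots, using that the first-success events are disjoint; the paper writes this last step as a sum over the histories in $\mathcal{H}_t$, which is your tower-property step.
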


\begin{proof}
Fix any reachable pre-transmission history $h_t$ under which slot
$t$ has low or good contention. Conditioned on $h_t$, the active
packets and their sending probabilities are fixed, and their
transmission decisions in slot $t$ are independent. Let
$\lambda=\Con(t)$. Then:
$$
0\le\lambda\le\frac{\beta}{\sqrt{\CollCost}}\le\frac12.
$$

Let $P_{\mathrm{col}}(t\mid h_t)$ and
$P_{\mathrm{suc}}(t\mid h_t)$ denote the conditional probabilities
of a collision and a success in slot $t$, respectively.
Lemmas~\ref{lem:collision-probability} and~\ref{lem:bender} give:
$P_{\mathrm{col}}(t\mid h_t)=O(\lambda^2)$
and $ P_{\mathrm{suc}}(t\mid h_t) \ge \lambda e^{-2\lambda}
$. The latter inequality implies:
$$
\lambda^2
\le
\lambda e^{2\lambda}P_{\mathrm{suc}}(t\mid h_t).
$$
The conditional expected collision-cost contribution of slot $t$
therefore satisfies:
$$
\begin{aligned}
\CollCost P_{\mathrm{col}}(t\mid h_t)
&=
O(\CollCost\lambda^2)\\
&=
O(\CollCost\lambda e^{2\lambda})
P_{\mathrm{suc}}(t\mid h_t)\\
&=
O(\sqrt{\CollCost})
P_{\mathrm{suc}}(t\mid h_t),
\end{aligned}
$$
where the last bound uses
$\CollCost\lambda\le\beta\sqrt{\CollCost}$ and
$e^{2\lambda}\le e$.

We now sum these contributions over histories and slots. For each
slot $t$, let $Z_t=\CollCost$ if the execution
reaches slot $t$, the slot has low or good contention, and a collision
occurs; otherwise, let $Z_t=0$. Let $F_t$ be the event that the first
success occurs in slot $t$ and that slot has low or good contention.

Let $\mathcal{H}_t$ be the set of reachable pre-transmission histories
under which slot $t$ has low or good contention. Conditioning on which
history in $\mathcal{H}_t$ occurs gives:
$$
\mathbb{E}[Z_t]
=
\sum_{h_t\in\mathcal{H}_t}
\Pr(h_t)\,
\CollCost P_{\mathrm{col}}(t\mid h_t).
$$
Applying the bound above to each term:
$$
\mathbb{E}[Z_t]
=
O(\sqrt{\CollCost})
\sum_{h_t\in\mathcal{H}_t}
\Pr(h_t)\,
P_{\mathrm{suc}}(t\mid h_t).
$$

Every history in $\mathcal{H}_t$ reaches slot $t$, so no success has
occurred before that slot. Hence, if a success occurs in slot $t$
under one of these histories, it is necessarily the first success.
Therefore:
$$
\sum_{h_t\in\mathcal{H}_t}
\Pr(h_t)\,
P_{\mathrm{suc}}(t\mid h_t)
=
\Pr(F_t).
$$
It follows that $
\mathbb{E}[Z_t]
=
O(\sqrt{\CollCost})\Pr(F_t)$.
Finally, the events $F_t$ are pairwise disjoint, since there can be
only one first successful slot. Thus, $
\sum_{t\ge1}\Pr(F_t)\le1$.
Since the variables $Z_t$ are nonnegative, summing over all slots gives:
$$
\begin{aligned}
\mathbb{E}\left[\sum_{t\ge1}Z_t\right]
&=
\sum_{t\ge1}\mathbb{E}[Z_t] =
O(\sqrt{\CollCost})
\sum_{t\ge1}\Pr(F_t)
=
O(\sqrt{\CollCost}),
\end{aligned}
$$
as claimed.
\end{proof}

%%%%%%%%%%%%%%%%%%%%%%%%%%%%%%%%%%%%%%%%%%%%%
%%%%%%%%%%%%%%%%%%%%%%%%%%%%%%%%%%%%%%%%%%%%%

\begin{lemma}\label{lem:collision-high}
Assume $\CollCost\ge K(\lg^{1/\tunableparam} n)$, where $K>0$
is a sufficiently large constant depending only on $\tunableparam$. For sufficiently large $d$,
the expected collision cost contributed by high-contention slots
is $O(1)$.
\end{lemma}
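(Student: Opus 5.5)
Split the high-contention collision cost at slot $g$: slots $t\le g$ and slots $t>g$. Each slot contributes at most $\CollCost$, so the cost of any set of slots is at most $\CollCost$ times the number of those slots that are reached.

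\emph{Slots after $g$.} Every such slot is reached only on the event $L>g$. The number of reached slots after $g$ is $L-g$. Conditioning on the history $h_{g+1}$ and applying Lemma~\ref{lem:latency-continuation} with $s=g+1$, then averaging and using Lemma~\ref{lem:first-growing-survival}, gives a contribution of at most
$$
\CollCost\,\exp\!\left(-\frac{m}{8e}\right)O\!\left(n\CollCost^{1/2+2\tunableparam}\ln\CollCost\,\lg^3 n\right).
$$
We bound this exactly as in the proof of Lemma~\ref{lem:latency-large-cost}. First use $n\le 2^{\CollCost^\tunableparam/2}$ and $\lg^3 n=O(\CollCost^{3\tunableparam})$. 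Then, since $\tunableparam\le1/2$, the exponent $\frac{\ln2}{2}\CollCost^\tunableparam-\frac{d}{8e}\sqrt{\CollCost}\ln\CollCost$ is at most $-\frac{d}{16e}\sqrt{\CollCost}\ln\CollCost$. The extra factor $\CollCost$ is only polynomial and is absorbed, so this part is $O(1)$.

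\emph{Slots up to $g$.} Through slot $g$, every active packet is in its initial growing phase, so $\Con$ can only increase at chunk boundaries of packets or at activations. By~\eqref{eq:1/sqrtC}, $\Con(t)\le 1/\sqrt{\CollCost}<\alpha/\sqrt{\CollCost}$ on the first chunk; I take $\alpha\ge1$ and otherwise handle the constant. Suppose some slot $t\le g$ has high contention, and let $\tau$ be the first such slot. I claim the $m$ slots preceding $\tau$ all have good contention. The argument mirrors Lemma~\ref{lem:first-growing-survival}. For $t'\in[\tau-m,\tau)$, packets already active double at most once, and newcomers contribute at most $n/w_0\le 1/\sqrt{\CollCost}$. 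Hence
$$
\frac{\beta}{\sqrt{\CollCost}}<\Con(\tau)\le 2\Con(t')+\frac{1}{\sqrt{\CollCost}},
$$
so $\Con(t')\ge(\beta-1)/(2\sqrt{\CollCost})\ge\alpha/\sqrt{\CollCost}$ because $\beta\ge8\alpha$. Also $\Con(t')\le\beta/\sqrt{\CollCost}$ by minimality of $\tau$. Finally $\tau>m$ as before.

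Therefore reaching any high-contention slot by time $g$ requires surviving $m$ good-contention slots. By Lemma~\ref{lem:success-good-contention}, this has probability at most $(1-\alpha/(e\sqrt{\CollCost}))^m\le\CollCost^{-\alpha d/e}$. On that event there are at most $g$ such slots, so the contribution is at most
$$
\CollCost\cdot g\cdot\CollCost^{-\alpha d/e}=O\!\left(\CollCost^{3/2+\tunableparam-\alpha d/e}\ln\CollCost\right)=O(1)
$$
for $d$ large.

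To make the counting rigorous, I use the bound $\mathbb{E}[\CollCost\cdot\#\{\text{reached high slots}\le g\}]\le\CollCost g\Pr(\text{reach a high slot by }g)$.

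The main obstacle is the step in the second paragraph: showing that the $m$ slots before the first high slot are genuinely good. This requires the newcomer contribution $1/\sqrt{\CollCost}$ to be small relative to $\beta/\sqrt{\CollCost}$, which holds since $\beta\ge8\alpha\ge8$. It also requires that no packet has yet entered a shrinking phase, which is exactly why we restrict to slots up to $g$.
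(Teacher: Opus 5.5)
Your proposal is correct and follows the paper's proof essentially step for step. It uses the same split at slot $g$ and the same first-high-slot argument showing that the preceding $m$ slots have good contention, bounded via Lemma~\ref{lem:success-good-contention} and charged at most $\CollCost g$. After $g$ it makes the same use of Lemmas~\ref{lem:latency-continuation} and~\ref{lem:first-growing-survival}, with a per-slot charge of $\CollCost$.

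The differences are cosmetic and harmless. In the post-$g$ estimate you bound $n\le 2^{\CollCost^\tunableparam/2}$, where the paper uses $n\le 2^{\CollCost^\tunableparam}/\sqrt{\CollCost}$. On the first chunk you only need $\Con(t)\le 1/\sqrt{\CollCost}\le\beta/\sqrt{\CollCost}$, not a strict inequality against $\alpha/\sqrt{\CollCost}$.
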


\ifshowproofs
\begin{proof}
We use the same choice of $K$ as in
Lemma~\ref{lem:first-growing-survival}, so
$K^{\tunableparam}\ge2$ and
$\lg\CollCost\le\CollCost^{\tunableparam}$ whenever $\CollCost\ge K$.
Since $n\ge2$, the condition
$\CollCost\ge K(\lg^{1/\tunableparam} n)$ gives $\CollCost\ge K$
and $\CollCost^{\tunableparam}\ge2\lg n$. Hence:
\begin{align}
\frac{n}{w_0}
=
2^{\lg n-\CollCost^{\tunableparam}}
\le
2^{-\CollCost^{\tunableparam}/2}
\le
2^{-(\lg\CollCost)/2}
=
\frac{1}{\sqrt{\CollCost}}\label{eqn:total-con}.
\end{align}

Let $H_{\le g}$ and $H_{>g}$ be the collision costs contributed by
high-contention slots at or before $g$ and after $g$, respectively.
Each is $\CollCost$ times the number of collisions in its specified slots.

\medskip
\noindent\textbf{Slots at or before $g$.}
Let $E$ be the event that the execution reaches a high-contention
slot at or before $g$. On this event, let $t_1$ be the first such
slot. No reached slot among the first $m$ has high contention:
every packet activated by then is still in its first chunk, so its
contribution is $1/w_0$ and, by Equation \ref{eqn:total-con}, the total is at most
$n/w_0\le1/\sqrt{\CollCost}$. Hence $t_1>m$.

Fix any $t\in[t_1-m,t_1)$. Through slot $g$, every activated packet
is still in its initial growing phase. A packet active at $t$
doubles its probability at most once between $t$ and $t_1$, since
$t_1-t\le m$ and its chunk boundaries are $m$ slots apart. A packet
activated after $t$ is still in its first chunk at $t_1$. Thus:
$$
\frac{\beta}{\sqrt{\CollCost}}
<\Con(t_1)
\le2\Con(t)+\frac{n}{w_0}
\le2\Con(t)+\frac1{\sqrt{\CollCost}}.
$$
Since $t_1$ is the first high-contention slot, this implies:
$$
\frac{\alpha}{\sqrt{\CollCost}}
\le\frac{\beta-1}{2\sqrt{\CollCost}}
<\Con(t)
\le\frac{\beta}{\sqrt{\CollCost}},
$$
where the first inequality uses $\beta\ge8\alpha$ and $\alpha\ge1$.
Therefore, all $m$ slots in $[t_1-m,t_1)$ have good contention, and 
they precede a reached slot, so none contains a success.

Consequently, the event $E$ requires encountering at least $m$
good-contention slots without a success. By
Lemma~\ref{lem:success-good-contention}, we have:
$$
\begin{aligned}
\Pr(E)
&\le\left(1-\frac{\alpha}{e\sqrt{\CollCost}}\right)^m\\
&\le\CollCost^{-\alpha d/e},
\end{aligned}
$$
where the second inequality follows from $1-x\le e^{-x}$ and
$m\ge d\sqrt{\CollCost}\ln\CollCost$. On $E$, at most $g$ slots can contribute to $H_{\le g}$, each contributing
at most $\CollCost$; outside $E$, $H_{\le g}=0$. Choose $d$ sufficiently large to satisfy all earlier requirements
and $\alpha d/e\ge3$. Since $\tunableparam\le1/2$, we have:
$$
\begin{aligned}
\mathbb{E}[H_{\le g}]
&\le\CollCost g\Pr(E)\\
&=O\left(\CollCost^{3/2+\tunableparam-\alpha d/e}
\ln\CollCost\right)\\
&=O\left(\frac{\ln\CollCost}{\CollCost}\right)
=O(1).
\end{aligned}
$$

\medskip
\noindent\textbf{Slots after $g$.}
If $L\le g$, then the first success occurs during the initial
growing phase, so the execution terminates by slot $g$ and $H_{>g}=0$.

It remains to consider the event $L>g$. Conditioned on any
pre-transmission history $h_{g+1}$ reached after surviving the
initial growing phase, we upper-bound $H_{>g}$ by charging the
per-collision cost $\CollCost$ to every remaining slot, whether or not it has
high contention or contains a collision. Lemma~\ref{lem:latency-continuation} gives:
$$
\begin{aligned}
\mathbb{E}[H_{>g}\mid h_{g+1}]
&\le\CollCost\,\mathbb{E}[L-g\mid h_{g+1}]\\
&=O\left(n\CollCost^{3/2+2\tunableparam}
\ln\CollCost\,\lg^3 n\right).
\end{aligned}
$$
Let $\mathcal{H}_{g+1}$ be the set of reachable pre-transmission
histories for slot $g+1$. Since the execution reaches slot $g+1$
exactly when $L>g$:
$$
\sum_{h_{g+1}\in\mathcal{H}_{g+1}}
\hspace{-8pt}\Pr(h_{g+1})
=
\Pr(L>g).
$$
Therefore, conditioning on the history at slot $g+1$:
\begin{align}
\mathbb{E}[H_{>g}]
&=
\hspace{-6pt}\sum_{h_{g+1}\in\mathcal{H}_{g+1}}
\hspace{-9pt}\Pr(h_{g+1})
\,\mathbb{E}[H_{>g}\mid h_{g+1}]\nonumber\\
&\le
\Pr(L>g)\,
O\left(
n\CollCost^{3/2+2\tunableparam}
\ln\CollCost\,\lg^3 n
\right)\nonumber\\
&\le\exp\left(-\frac{m}{8e}\right)
O\left(
n\CollCost^{3/2+2\tunableparam}
\ln\CollCost\,\lg^3 n
\right)\label{eqn:EH},
\end{align}
where the last line follows by Lemma~\ref{lem:first-growing-survival}. 

We next bound the remaining factors in terms of $\CollCost$.
From ${n}/{w_0}\le {1}/{\sqrt{\CollCost}}$ 
and $w_0=2^{\CollCost^{\tunableparam}}$, we have
$n\le {2^{\CollCost^{\tunableparam}}}/{\sqrt{\CollCost}}$, which implies that 
$\lg n
\le \CollCost^{\tunableparam}$, which in turn implies 
$\lg^3 n
\le
\CollCost^{3\tunableparam}$. Substituting these two bounds into Equation \ref{eqn:EH} gives:
\begin{align*}
\mathbb{E}[H_{>g}]
&\le
\exp\left(-\frac{m}{8e}\right)
O\left(
\frac{2^{\CollCost^{\tunableparam}}}{\sqrt{\CollCost}}
\CollCost^{3/2+2\tunableparam}
\ln\CollCost\,
\CollCost^{3\tunableparam}
\right)\\
&=
\exp\left(-\frac{m}{8e}\right)
O\left(
2^{\CollCost^{\tunableparam}}
\CollCost^{1+5\tunableparam}
\ln\CollCost
\right)\\
&=O(1),
\end{align*}
where the last line follows from $m\ge d\sqrt{\CollCost}\ln\CollCost$ and the fact that the exponential decay dominates the remaining polynomial and
logarithmic factors. Since $\mathbb{E}[H_{\le g}]=O(1)$ and
$\mathbb{E}[H_{>g}]=O(1)$, the expected collision cost contributed
by high-contention slots is $O(1)$.
\end{proof}
\fi
%%%%%%%%%%%%%%%%%%%%%%%%%%%%%%%%%%%%%%%%%%%%%
%%%%%%%%%%%%%%%%%%%%%%%%%%%%%%%%%%%%%%%%%%%%%

\begin{lemma}\label{lem:collision-dynamic}
There is a constant $K>0$, depending only on $\tunableparam$, such
that the expected collision cost of \mainAlgorithmAcrynom in the
dynamic setting satisfies the following bounds.
\begin{enumerate}[leftmargin=14pt]
    \item If $\CollCost\ge K(\lg^{1/\tunableparam} n)$, then the
    expected collision cost is $O(\sqrt{\CollCost})$.

    \item If $\CollCost<K(\lg^{1/\tunableparam} n)$, then the
    expected collision cost is
    $O\hspace{-2pt}\left(n\log^{\Theta(1/\tunableparam)}n\right)$.
\end{enumerate}
\end{lemma}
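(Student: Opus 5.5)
The proof splits according to the two regimes in the statement. Both cases rest on the fact that every reached slot is either low, good, or high contention (Definition~\ref{def:contention-regions}), so the total collision cost decomposes into the charges from low/good slots and the charges from high slots.

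\emph{Case 1} ($\CollCost\ge K\lg^{1/\tunableparam}n$). Take $K$ at least as large as required by Lemmas~\ref{lem:first-growing-survival} and~\ref{lem:collision-high}, and fix $d$ large enough for all earlier lemmas. By linearity of expectation, the expected collision cost is the sum of two terms.
\begin{itemize}
\item Low- and good-contention slots contribute $O(\sqrt{\CollCost})$ by Lemma~\ref{lem:collision-low-good}, which holds unconditionally.
\item High-contention slots contribute $O(1)$ by Lemma~\ref{lem:collision-high}.
\end{itemize}
Summing gives $O(\sqrt{\CollCost})$. This case is pure bookkeeping.

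\emph{Case 2} ($\CollCost<K\lg^{1/\tunableparam}n$). Here I would not separate contention types. Instead, I charge $\CollCost$ to every executed slot, so the collision cost is at most $\CollCost\cdot L$ pointwise. Taking expectations gives at most $\CollCost\,\mathbb{E}[L]$.

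To bound $\mathbb{E}[L]$, apply Lemma~\ref{lem:latency-continuation} with $s=1$. Averaging over the reachable histories $h_1$ gives
\[
\mathbb{E}[L]=O\!\left(n\CollCost^{1/2+2\tunableparam}\ln\CollCost\,\lg^3 n\right).
\]
Substituting $\CollCost=O(\lg^{1/\tunableparam}n)$ yields
\[
\CollCost\,\mathbb{E}[L]=O\!\left(n\CollCost^{3/2+2\tunableparam}\ln\CollCost\,\lg^3 n\right).
\]
The powers of $\CollCost$ become $\lg^{(3/2+2\tunableparam)/\tunableparam}n=\lg^{2+3/(2\tunableparam)}n$. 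The factor $\ln\CollCost$ is $O(\log\log n)\le O(\lg n)$. Hence the total is $O(n\lg^{6+3/(2\tunableparam)}n)$, which has the form $\log^{k+k'/\tunableparam}n$ with $k=6$ and $k'=3/2$, i.e.\ $O(n\log^{\Theta(1/\tunableparam)}n)$.

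The one point needing care is that Lemma~\ref{lem:latency-continuation} and the contention definitions assume $\CollCost$ exceeds a sufficiently large constant. In Case 2, however, $\CollCost$ may be small relative to $K$. This is resolved by the model assumption that $\CollCost$ is at least a sufficiently large constant. Any bounded range of $\CollCost$ only changes the hidden constants, since the bound is in terms of $n$.

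The main obstacle, to the extent there is one, is making sure the Case 2 bound on $\mathbb{E}[L]$ applies from the very first slot. The hypothesis $1\le s\le g+1$ of Lemma~\ref{lem:latency-continuation} permits $s=1$, so no further argument is needed.
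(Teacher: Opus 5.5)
Your proposal is correct and follows the paper's proof essentially step for step. In Case~1 you sum Lemmas~\ref{lem:collision-low-good} and~\ref{lem:collision-high}. In Case~2 you bound the collision cost by $\CollCost L$ and apply Lemma~\ref{lem:latency-continuation} with $s=1$, reaching the same $O(n\lg^{3/(2\tunableparam)+6}n)$ bound.
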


\ifshowproofs
\begin{proof}
Choose $K$ sufficiently large to satisfy the requirements of
Lemma~\ref{lem:collision-high}. Our analysis treats two cases.\medskip

\noindent\textbf{Case 1:
$\CollCost\ge K(\lg^{1/\tunableparam} n)$.}
Every slot has low, good, or high contention. Adding the bounds
from Lemmas~\ref{lem:collision-low-good} and~\ref{lem:collision-high}
shows that the expected collision cost over the execution is:
$$
O(\sqrt{\CollCost})+O(1)
=
O(\sqrt{\CollCost}).
$$
\noindent\textbf{Case 2:
$\CollCost<K(\lg^{1/\tunableparam} n)$.}
Since $K$ and $\tunableparam$ are fixed constants, we have:
$\CollCost=O(\lg^{1/\tunableparam} n)$. Each slot contributes at most
$\CollCost$, so the execution's collision cost is at most $\CollCost L$. Applying
Lemma~\ref{lem:latency-continuation} with $s=1$ and averaging over
the possible initial histories gives that the expected collision cost over the execution is upper-bounded by:
$$\CollCost\,\mathbb{E}[L]\\
=
O\left(
n\CollCost^{3/2+2\tunableparam}
\ln\CollCost\,\lg^3 n
\right).$$
The bound on $\CollCost$ gives:
$$
\CollCost^{3/2+2\tunableparam}
=
O\left(
\lg^{3/(2\tunableparam)+2}n
\right).
$$
Thus, the expected collision cost over the execution is bounded by:
$$
\begin{aligned}
&O\left(
n\lg^{3/(2\tunableparam)+2}n
\cdot\lg n\cdot\lg^3 n
\right)\\
&=
O\left(
n\lg^{3/(2\tunableparam)+6}n
\right)\\
&=
O\left(
n\log^{\Theta(1/\tunableparam)}n
\right),
\end{aligned}
$$
since $3/(2\tunableparam)+6=\Theta(1/\tunableparam)$ for
$0<\tunableparam\le1/2$.
\end{proof}
\fi

%%%%%%%%%%%%%%%%%%%%%%%%%%%%%%%%%%%%%%%%%%%%%%%%%%%%%%
%%%%%%%%%%%%%%%%%%%%%%%%%%%%%%%%%%%%%%%%%%%%%%%%%%%%%%

{\textsc{Theorem}~\ref{thm:dynamic}.} {\it Fix a constant $0<\tunableparam\le1/2$ and choose $d$ sufficiently
large. There is a constant $K>0$, depending only on
$\tunableparam$, such that against any adaptive adversary satisfying
our model, \mainAlgorithmAcrynom solves the dynamic wakeup problem with the following
bounds.
\begin{enumerate}[leftmargin=14pt]
    \item If $\CollCost\ge K(\lg^{1/\tunableparam} n)$, then the
    expected latency is
    $O(\CollCost^{1/2+\tunableparam}\ln\CollCost)$
    and the expected collision cost is $O(\sqrt{\CollCost})$.\smallskip

    \item If $\CollCost<K(\lg^{1/\tunableparam} n)$, then the
    expected latency and expected collision cost are both
    $O(n\log^{\Theta(1/\tunableparam)}n)$.
\end{enumerate}
}

\begin{proof}
Choose $K$ sufficiently large to satisfy the requirements of
Lemmas~\ref{lem:latency-large-cost} and~\ref{lem:collision-dynamic}.

\medskip
\noindent\textbf{Case 1:
$\CollCost\ge K(\lg^{1/\tunableparam} n)$.} By
Lemma~\ref{lem:latency-large-cost}, the expected latency is 
$O(\CollCost^{1/2+\tunableparam}\ln\CollCost)$. By Lemma~\ref{lem:collision-dynamic}, the expected collision cost is $O(\sqrt{\CollCost})$.

\medskip
\noindent\textbf{Case 2:
$\CollCost<K(\lg^{1/\tunableparam} n)$.}
Again, since $K$ and $\tunableparam$ are fixed constants, we have $\CollCost=O(\lg^{1/\tunableparam} n)$. Applying Lemma~\ref{lem:latency-continuation} with $s=1$ and
averaging over the possible initial histories gives:
$$
\mathbb{E}[L]
=
O\left(
n\CollCost^{1/2+2\tunableparam}
\ln\CollCost\,\lg^3 n
\right).
$$
The bound on $\CollCost$ implies:
$$
\CollCost^{1/2+2\tunableparam}
=
O\left(
\lg^{1/(2\tunableparam)+2}n
\right).
$$
Therefore, we have:
$$
\begin{aligned}
\mathbb{E}[L]
&=
O\left(
n\lg^{1/(2\tunableparam)+2}n
\cdot\lg n\cdot\lg^3 n
\right)\\
&=
O\left(
n\lg^{1/(2\tunableparam)+6}n
\right)\\
&=
O\left(
n\log^{\Theta(1/\tunableparam)}n
\right),
\end{aligned}
$$
since $1/(2\tunableparam)+6=\Theta(1/\tunableparam)$ for
$0<\tunableparam\le1/2$. By Lemma~\ref{lem:collision-dynamic}, the expected collision cost is:
$O(n\log^{\Theta(1/\tunableparam)}n)$, which completes the argument.
\end{proof}

%%%%%%%%%%%%%%%%%%%%%%%%%%%%%%%%%%%%%%%%%%%%%%%%%%%%%% 
%%%%%%%%%%%%%%%%%%%%%%%%%%%%%%%%%%%%%%%%%%%%%%%%%%%%%% 

\section{Lower Bound}\label{sec:lower-bound}

We first prove a tradeoff between latency and collision cost, then explain the classical dynamic-wakeup lower bound when the per-collision cost $\CollCost$ is fixed and $n$ grows.\medskip

\noindent{\bf Algorithm class and metrics.}
We consider \defn{age-based batch-fair} algorithms. For each value of
$\CollCost$, such an algorithm $A$ specifies probabilities:
$$
p_1(\CollCost),p_2(\CollCost),p_3(\CollCost),\ldots.$$
In its $j$-th slot after activation, a packet transmits with probability
$p_j(\CollCost)$, using fresh randomness independent of the other packets.
The probabilities depend only on $\CollCost$ and local age, not on $n$,
global time, or previous transmission outcomes. Thus, packets activated
together use the same probability in each slot, which is the meaning of
batch fairness here. \mainAlgorithmAcrynom belongs to this class for each fixed choice of
its algorithm parameters.

Let $\mathcal{L}_A(n,\CollCost)$ denote the worst-case expected latency
of $A$, and let $\mathcal{K}_A(n,\CollCost)$ denote its worst-case
expected collision cost, where the worst case is over adversaries
satisfying our model and activating at most $n$ packets. Packets need
not all be activated before the first success. Define:
$$
\mathcal{M}_A(n,\CollCost)
=
\max\left\{
\mathcal{L}_A(n,\CollCost),
\mathcal{K}_A(n,\CollCost)
\right\}.
$$
An execution with no success has infinite latency. The lower bounds
below use activation schedules fixed in advance, so they also apply
against the adaptive adversaries allowed by our model.

\subsection{Tradeoff: Latency \& Expected Collision Cost}

\begin{lemma}\label{lem:lb-collision}
For every age-based batch-fair algorithm $A$, every $n\ge2$, and every
$\CollCost$, if
$\mathcal{L}_A(n,\CollCost)<\infty$, then:
$$
\mathcal{L}_A(n,\CollCost)
\mathcal{K}_A(n,\CollCost)
\ge\frac{\CollCost}{4}.
$$
\end{lemma}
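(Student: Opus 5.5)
The plan is to exhibit a single fixed activation schedule under which latency and collision cost are forced into the stated tradeoff. Since $n\ge2$, the adversary activates exactly two packets together in the first slot and none afterwards. This schedule is fixed in advance, so it is permitted against the model, and its expected latency and expected collision cost lower-bound $\mathcal{L}_A(n,\CollCost)$ and $\mathcal{K}_A(n,\CollCost)$ respectively. Because $A$ is age-based batch-fair, in the $j$-th slot both packets transmit independently with the same probability $p_j:=p_j(\CollCost)$, regardless of history. Hence slot $j$ yields a success with probability $2p_j(1-p_j)$ and a collision with probability $p_j^2$.

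Let $s_j$ denote the probability that the execution reaches slot $j$ without a success, so $s_1=1$. The key steps, in order, are the following.

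\emph{Latency.} Write the expected latency of this schedule as $\mathbb{E}[L]=\sum_{j\ge1}s_j$.

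\emph{Collision cost.} Write its expected collision cost as $\CollCost\sum_{j\ge1}s_jp_j^2$. A collision in slot $j$ is counted exactly when slot $j$ is reached, and the slot's outcome is independent of reaching it.

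\emph{Certain success.} Since $\mathcal{L}_A(n,\CollCost)<\infty$, this execution succeeds with probability $1$. Therefore
\[
1=\sum_{j\ge1}s_j\cdot 2p_j(1-p_j)\le 2\sum_{j\ge1}s_jp_j .
\]

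\emph{Combine.} Apply Cauchy--Schwarz with weights $s_j$:
\[
\Bigl(\sum_j s_jp_j\Bigr)^2\le\Bigl(\sum_j s_j\Bigr)\Bigl(\sum_j s_jp_j^2\Bigr).
\]
Together with the previous step, this gives $\tfrac14\le \mathbb{E}[L]\cdot\mathbb{E}[\text{collision cost}]/\CollCost$.

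Chaining $\mathcal{L}_A\ge\mathbb{E}[L]$ and $\mathcal{K}_A\ge\mathbb{E}[\text{collision cost}]$ then yields $\mathcal{L}_A(n,\CollCost)\,\mathcal{K}_A(n,\CollCost)\ge\CollCost/4$.

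The main point needing care is the success-with-probability-one identity. Finite worst-case expected latency implies $\Pr(L<\infty)=1$ for this particular schedule, and the first-success events in distinct slots are disjoint, so their probabilities $s_j\cdot2p_j(1-p_j)$ sum to exactly $1$. One also has to check that the series manipulations are legitimate. All terms are nonnegative, and $\sum_j s_j<\infty$ by assumption. If $\sum_j s_jp_j^2$ were infinite the inequality would hold trivially, so Cauchy--Schwarz applies to convergent nonnegative series.
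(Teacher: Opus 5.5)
Your proposal is correct and follows the paper's proof step for step: the same schedule with two packets activated in slot $1$, the identity $1=\sum_t a_t\,2p_t(1-p_t)\le 2\sum_t a_t p_t$ obtained from almost-sure success, and Cauchy--Schwarz on $(\sqrt{a_t})$ and $(\sqrt{a_t}\,p_t)$, followed by chaining with the worst-case quantities. Your caveat about $\sum_j s_jp_j^2$ being infinite is harmless but unnecessary, since $\sum_j s_jp_j^2\le\sum_j s_j<\infty$ automatically.
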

\begin{proof}
Consider this activation schedule: we activate two of the $n$ packets in slot $1$ and no others.  Recall that $L$ denotes the latency,
with $L=\infty$ if no success occurs, and let $X$ be the collision
cost of the execution.  

For this activation schedule, we have $\mathbb{E}[L]\le\mathcal{L}_A(n,\CollCost)<\infty$, where the finiteness of$ \mathcal{L}_A(n,\CollCost)$ is an assumption in the lemma. Consequently, a success occurs with probability $1$: otherwise,
$L=\infty$ with positive probability, which would imply
$\mathbb{E}[L]=\infty$, contradicting the bound above.

%%%%%%

Let $a_t=\Pr(L\ge t)$, which is the probability that the execution
reaches slot $t$ without a success in any earlier slot. Since both
packets were activated in slot $1$, conditioned on reaching slot $t$,
each transmits independently with probability $p_t(\CollCost)$.
Therefore:
$$
\Pr(L=t)
=
a_t\,2p_t(\CollCost)(1-p_t(\CollCost)).
$$
Since a success occurs with probability $1$, the events
$\{L=t\}$, for $t\ge1$, account for all executions. Hence:
\begin{equation}
1
=
\sum_{t\ge1}\Pr(L=t)
=
\sum_{t\ge1}a_t\,2p_t(\CollCost)(1-p_t(\CollCost))
\le
2\sum_{t\ge1}a_t p_t(\CollCost).
\label{eqn:fraction}
\end{equation}

%%%%%%

We next express the expected latency and expected collision cost in terms of the probabilities $a_t$. First, since $L$ is a positive integer-valued random variable,
the tail-sum formula gives:
$$
\mathbb{E}[L]
=
\sum_{t\ge1}\Pr(L\ge t)
=
\sum_{t\ge1}a_t.
$$
Second, let $C_t$ be the indicator that slot $t$ is reached and
contains a collision. Then $X=\CollCost\sum_{t\ge1}C_t$.

Conditioned on reaching slot $t$, both packets transmit independently
with probability $p_t(\CollCost)$, so the conditional probability of
a collision is $p_t(\CollCost)^2$. Therefore:
$$
\Pr(C_t=1)
=
\Pr(L\ge t)\,p_t(\CollCost)^2
=
a_t p_t(\CollCost)^2.
$$
Taking expectations and summing over the slots gives:
$$
\begin{aligned}
\mathbb{E}[X]
&=
\CollCost\sum_{t\ge1}\mathbb{E}[C_t]
= \CollCost\sum_{t\ge1}a_t p_t(\CollCost)^2.
\end{aligned}
$$
Equivalently, we have
${\mathbb{E}[X]}/{\CollCost}
=
\sum_{t\ge1}a_t p_t(\CollCost)^2$.
%%%%%%%%%%%
If $\mathbb{E}[X]=\infty$, then
$\mathcal{K}_A(n,\CollCost)=\infty$ and the claim is immediate. Otherwise, Equation \ref{eqn:fraction} gives $
{1}/{2}
\le
\sum_{t\ge1}a_t p_t(\CollCost)$.
Squaring both sides gives:
$$
\frac14
\le
\left(\sum_{t\ge1}a_t p_t(\CollCost)\right)^2.
$$
Applying Cauchy--Schwarz \cite{steele2004cauchy} to the two sequences
$\left(\sqrt{a_t}\right)_{t\ge1}$ and
$\left(\sqrt{a_t}\,p_t(\CollCost)\right)_{t\ge1}$ gives: 
$$
\begin{aligned}
\frac14
&\le
\left(\sum_{t\ge1}a_t p_t(\CollCost)\right)^2\\
&=
\left(
\sum_{t\ge1}
\sqrt{a_t}
\left(\sqrt{a_t}\,p_t(\CollCost)\right)
\right)^2\\
&\le
\left(\sum_{t\ge1}a_t\right)
\left(\sum_{t\ge1}a_t p_t(\CollCost)^2\right)\\
&=
\frac{\mathbb{E}[L]\mathbb{E}[X]}{\CollCost}.
\end{aligned}
$$
Since $\mathcal{L}_A(n,\CollCost)$ and
$\mathcal{K}_A(n,\CollCost)$ are the worst-case expected latency
and expected collision cost, respectively, over all valid activation
schedules, 
$\mathbb{E}[L]\le\mathcal{L}_A(n,\CollCost)$ and 
$\mathbb{E}[X]\le\mathcal{K}_A(n,\CollCost)$. Therefore:
$\mathcal{L}_A(n,\CollCost)
\mathcal{K}_A(n,\CollCost)
\ge
{\CollCost}/{4}$, as claimed.
\end{proof}

{\textsc{Theorem}~\ref{thm:dynamic-lower-bound}.}  {\it 
For every age-based batch-fair algorithm $A$, every $n\ge2$, and every
$\CollCost$,  $\mathcal{M}_A(n,\CollCost)
\ge
{\sqrt{\CollCost}}/{2}$.
}

\begin{proof}
If $\mathcal{M}_A(n,\CollCost)$ is not finite, the claim is trivially satisfied. Otherwise, the expected latency is finite, so
Lemma~\ref{lem:lb-collision} and the definition of
$\mathcal{M}_A$ give:
$$
{\CollCost}/{4}
\le
\mathcal{L}_A(n,\CollCost)
\mathcal{K}_A(n,\CollCost)
\le
\mathcal{M}_A(n,\CollCost)^2,
$$
and taking the square root on both sides yields the result.
\end{proof}

\medskip
\noindent{\bf Comparison with \mainAlgorithm.} An $O(\sqrt{\CollCost})$ worst-case expected
collision-cost guarantee requires $\Omega(\sqrt{\CollCost})$
worst-case expected latency. When
$\CollCost\ge K(\lg^{1/\tunableparam} n)$,
Lemmas~\ref{lem:latency-large-cost}
and~\ref{lem:collision-dynamic} give:
$$
\begin{aligned}
\mathcal{K}_{\mathrm{\mainAlgorithmAcrynom}}(n,\CollCost)
&=O\left(\sqrt{\CollCost}\right),\\
\mathcal{L}_{\mathrm{\mainAlgorithmAcrynom}}(n,\CollCost)
&=O\left(\CollCost^{1/2+\tunableparam}\ln\CollCost\right).
\end{aligned}
$$
Thus, in this regime, the maximum of \mainAlgorithmAcrynom's two expectations is
within a factor $O(\CollCost^{\tunableparam}\ln\CollCost)$ of the
best possible for this algorithm class.

%%%%%%%%%%%%%%%%%%%%%%%%%%%%%%%%%%%%%%%%%%%%%%%%%%%%%%%%%%%%%

% we're compressing this section for the conference submission, but I'd like to keep the original.
\ifshowproofs
\subsection{The Classical Dynamic Lower Bound for Fixed Per-Collision Cost}

The classical dynamic-wakeup lower bound requires some care in our
setting. Theorem~6 of Jurdzi\'nski and Stachowiak
\cite{jurdzinski2005probabilistic} proves an
$\Omega(n/\log n)$ bounded-error latency lower bound for dynamic
wakeup when packets have no identifiers, have only local clocks,
and do not know $n$. At first glance, this bound may appear stronger
than our $\Omega(\sqrt{\CollCost})$ lower bound and, in some parameter
regimes, inconsistent with our upper bounds. For completeness, we
record a direct adaptation of their construction to our age-based
batch-fair algorithm class and make explicit how the resulting bound
depends on the per-collision cost $\CollCost$.

When $\CollCost$ is fixed independently of $n$, this adaptation gives
an $\Omega(n/\log n)$ expected-latency lower bound. However, the hidden
constant and the required threshold on $n$ may depend on the chosen
value of $\CollCost$. This dependence matters because packets know
$\CollCost$, so their transmission-probability schedules may change
with it. Thus, when $\CollCost$ is allowed to vary with $n$, the
argument does not give an $\Omega(n/\log n)$ lower bound with a hidden
constant independent of $\CollCost$. Consequently, it neither
contradicts our large-$\CollCost$ upper bounds nor yields an additional
$\Omega(n/\log n)$ term that can be combined with our
$\Omega(\sqrt{\CollCost})$ lower bound.

%%%%%%%%%%%%%%%%%%%%%%%%%%%%%%%%%%%%%%%
%%%%%%%%%%%%%%%%%%%%%%%%%%%%%%%%%%%%%%%

\begin{lemma}\label{lem:lb-dynamic-latency}
Fix an age-based batch-fair algorithm $A$ and a per-collision cost
$\CollCost$. There exist constants $c_0>0$ and $n_0$, which may depend
on $A$ and $\CollCost$, such that for every $n\ge n_0$:
$$
\mathcal{L}_A(n,\CollCost)
\ge c_0\frac{n}{\ln n}.
$$
\end{lemma}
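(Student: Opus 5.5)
The plan is to adapt the Jurdzi\'nski--Stachowiak construction to a schedule $p_1,p_2,\ldots$ (we suppress the argument $\CollCost$) that is fixed once $A$ and $\CollCost$ are fixed. We exhibit, for each large $n$, an oblivious activation schedule of at most $n$ packets under which no success occurs during the first $T:=\lfloor c_0 n/\ln n\rfloor$ slots with probability at least $1/2$. This gives $\mathcal{L}_A(n,\CollCost)\ge T/2$, and after renaming the constant, the claimed bound. Every schedule is fixed in advance, so by the remark before Lemma~\ref{lem:lb-collision} the bound also applies to adaptive adversaries. Since $p_j$ does not depend on $n$, the constants $c_0,n_0$ may depend on the sequence $(p_j)$, and hence on $A$ and $\CollCost$, as the statement allows.

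\emph{Key quantity.} Let $S(T)=\sum_{j\le T}p_j$ be the total sending mass of one packet over its first $T$ slots of age. We split on whether $S(T)$ is small or large relative to $\ln n$, with a threshold $\lambda\ln n$ for a suitable constant $\lambda$.

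\emph{Light case, $S(T)\le\lambda\ln n$.} Here we want a single packet, or a small batch, to fail to succeed throughout its first $T$ slots. If every $p_j\le 1/2$ for $j\le T$, one packet survives with probability $\prod_{j\le T}(1-p_j)\ge e^{-2S(T)}$. This alone is only $n^{-2\lambda}$, which is too weak. So in this case we instead activate batches staggered in age, with the number of packets per batch chosen so that in every slot $t\le T$ the contention stays either tiny or at least $\Theta(\ln n)$. This is where the $n/\ln n$ tradeoff should come from: about $\ln n$ mass per slot is needed to suppress success, and there are only $n$ packets.

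Slots with $p_j$ close to $1$ must be treated separately. Let $\Lambda=\{j\le T: p_j>1/2\}$. At such ages, activating two packets together (as in Lemma~\ref{lem:lb-collision}) makes a success occur with probability only $2p_j(1-p_j)\le 2(1-p_j)$. So we pair packets at those ages, and add to $S$ the complementary mass $\sum_{j\in\Lambda}(1-p_j)$.

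\emph{Heavy case, $S(T)>\lambda\ln n$.} We activate one packet in each of the slots $1,\ldots,\lfloor n/2\rfloor$, possibly with a shift, so that in every slot $t\le T$ the contention $\sum_{j\le t}p_j$ is at least $\lambda'\ln n$ once it has become nonnegligible. We then bound the per-slot success probability. Conditioned on the pre-transmission history, packets transmit independently, so $\Pr(\text{exactly one})\le\sum_i p_i\prod_{k\ne i}(1-p_k)\le \Con(t)\,e^{-(\Con(t)-1)}$. Once $\Con(t)\ge \lambda'\ln n$ this is $O(\ln n\cdot n^{-\lambda'})$, and a union bound over $T\le n$ slots gives at most $1/4$ for $\lambda'$ large. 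The slots before the contention reaches this level fall under the light-case analysis applied to the prefix.

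The main obstacle is gluing these two cases. The contention built by staggered singletons ramps up through an intermediate range where per-slot success probability is constant. We have to show that this ramp lasts only $O(1)$ expected slots, or else that it can be skipped by activating a large batch at once. The budget of $n$ packets must cover $\Theta(\ln n)$ mass in each of $T$ slots, which forces $T=O(n/\ln n)$. I would first try the classical ``doubling batch'' trick: activate batches whose ages are chosen greedily so that the running contention never falls below $\lambda'\ln n$, and count packets via $\sum_t(\text{mass needed})/S(\cdot)$. I expect the constants extracted from $(p_j)$ in this step to be exactly what makes $c_0$ and $n_0$ depend on $\CollCost$.
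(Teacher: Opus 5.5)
Your proposal has a genuine gap: it never actually bounds $\Pr(L\le T)$ for a general schedule $(p_a)$. The light-case construction (``batches staggered in age'' keeping contention ``tiny or at least $\Theta(\ln n)$'') is never specified. The step you flag as the main obstacle, namely the ramp through intermediate contention and the gluing of the two cases, is left as something you would ``first try''.

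The pieces you do specify also break:
\begin{enumerate}
\item With one packet activated per slot, the contention in slot $t$ is $S(t)$, which can grow arbitrarily slowly. For $p_a=1/a$, the first packet activated is alone in its first slot and transmits with probability $p_1=1$, so it succeeds at once. Moreover $S(t)=H_t<\lambda'\ln n$ for every $t\le n$ once $\lambda'>1$, so staggered singletons never reach the regime your union bound needs.
\item Pairing packets at ages with $p_j>1/2$ leaves a per-slot success probability $2p_j(1-p_j)$. This is a constant (e.g.\ $3/8$ for $p_j=3/4$) unless $p_j\to1$, and adding ``complementary mass'' to $S$ does not change that.
\item Slots with ``tiny'' but nonzero contention are harmless only if their total contention over all $T$ slots is $o(1)$, which nothing in your construction arranges.
\end{enumerate}

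The missing idea is that no analysis of $S(T)$ is needed, because $A$ and $\CollCost$ are fixed. Let $j$ be the smallest age with $p_j>0$, and set $p=p_j$, a constant independent of $n$. Activate a batch of $2r$ packets, with $r=\lceil 3\ln n/p\rceil$, in each of slots $1,\ldots,T$, where $T=\lfloor n/(2r)\rfloor$; this uses at most $n$ packets.
\begin{itemize}
\item Before slot $j$, no packet has positive sending probability, so there is no success.
\item In each slot $t\in[j,T]$, the batch activated in slot $t-j+1$ is at age $j$. Split it into two groups of $r$. Both groups contain a transmitter except with probability $2(1-p)^r\le 2/n^3$, and then slot $t$ is a collision whatever the other batches do.
\item A union bound gives $\Pr(L\le T)\le 2/n^2$. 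Hence $\mathcal{L}_A(n,\CollCost)\ge T(1-2/n^2)=\Omega(pn/\ln n)$.
\end{itemize}
Your own bound $\Con(t)e^{1-\Con(t)}$ on the success probability would close the argument equally well. That batch alone gives $\Con(t)\ge 6\ln n$, and $xe^{1-x}$ is decreasing for $x\ge1$.

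This realizes exactly your budget intuition, $\Theta(\ln n)$ sending mass per slot out of $n$ packets, but with all the mass supplied by a single fixed age. It also makes transparent that $c_0$ depends on $\CollCost$ through $p_j(\CollCost)$.
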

\begin{proof}
We adapt the batch-injection construction in the proof of
Theorem~6 of Jurdzi\'nski and Stachowiak
\cite{jurdzinski2005probabilistic}.
If $p_j(\CollCost)=0$ for every $j$, no packet ever transmits and
latency is infinite. Otherwise, let $j$ be the smallest index with
$p_j(\CollCost)>0$, and write $p=p_j(\CollCost)$. Since the algorithm
and $\CollCost$ are fixed, $j$ and $p$ do not depend on $n$.

For sufficiently large $n$, set:
$$
r=\left\lceil\frac{3\ln n}{p}\right\rceil,
\qquad
T=\left\lfloor\frac{n}{2r}\right\rfloor.
$$
Activate a batch of $2r$ packets in each of slots $1,\ldots,T$, with
no further activations. This uses at most $n$ packets. Partition each
batch into two groups of $r$ packets for the analysis.

No success is possible before slot $j$, since no packet yet has
positive sending probability. For each $j\le t\le T$, the batch
activated in slot $t-j+1$ has local age $j$. Conditioned on reaching
slot $t$, its packets independently transmit with probability $p$.
If each of its two groups contains a transmitter, then slot $t$
has a collision, regardless of the other batches' actions. Thus,
for every history reaching slot $t$, its success probability is
at most:
$$
2(1-p)^r\le2e^{-pr}\le\frac{2}{n^3}.
$$
Summing the probabilities of a first success in slots $1,\ldots,T$
gives:
$$
\Pr(L\le T)\le\frac{2T}{n^3}\le\frac{2}{n^2}.
$$
On $L>T$, latency is at least $T$. Therefore:
$$
\mathcal{L}_A(n,\CollCost)
\ge T\left(1-\frac{2}{n^2}\right).
$$
For sufficiently large $n$, we have $r\le4\ln n/p$ and
$n/(2r)\ge2$, which imply $T\ge pn/(16\ln n)$. Hence:
$$
\mathcal{L}_A(n,\CollCost)
\ge\frac{p n}{32\ln n}.
$$
This proves the lemma. In particular, the coefficient obtained
by this construction depends on $p_j(\CollCost)$.
\end{proof}

%%%%%%%%%%%%%%%%%%%%%%%%%%%%%%%%%%%%%%%
%%%%%%%%%%%%%%%%%%%%%%%%%%%%%%%%%%%%%%%

\noindent{\bf Why holding $\CollCost$ constant matters.}
The lemma first chooses a particular value of $\CollCost$ and then
takes $n$ sufficiently large while keeping that value unchanged.
It does not give a constant or a threshold on $n$ that is independent
of $\CollCost$. Our model supplies $\CollCost$ to the packets, so the
probability sequence can change when $\CollCost$ changes, even though
the packets do not know $n$.

For \mainAlgorithmAcrynom, the first positive sending probability is
$p_1(\CollCost)=1/w_0$. The construction above therefore requires
batches of $\Theta(w_0\ln n)$ packets. In the large-$\CollCost$ 
regime of Theorem \ref{thm:dynamic} where, $\CollCost\ge K(\lg^{1/\tunableparam}n)$, our choice of
$K$ means $\CollCost^\tunableparam\ge2\lg n$, and 
so $w_0=2^{\CollCost^\tunableparam}\ge n^2$. The construction above requires batches of
$\Theta(w_0\ln n)$ packets, which is more than the total population of $n$ packets. Thus, the construction cannot be carried out in this regime.

This does not contradict the lower bound established in
Lemma~\ref{lem:lb-dynamic-latency}. That lemma first chooses a value
of $\CollCost$ and then holds it constant as $n$ grows. In that case,
$w_0$ is also constant, so for sufficiently large $n$ there are enough
packets to form batches of size $\Theta(w_0\ln n)$, as required by the
construction. However, as $n$ grows while $\CollCost$ remains constant,
the condition
$\CollCost\ge K(\lg^{1/\tunableparam}n)$ eventually fails. Thus, the
fixed-$\CollCost$ lower bound applies asymptotically outside the
large-$\CollCost$ regime in which we obtain our sharper upper bound.

%%%%%%%%%%%%%%%

For a concrete example, take $\tunableparam=1/2$ and
$\CollCost=K\lg^2 n$, where $K$ is the constant from our
large-$\CollCost$ case. Since $1/\tunableparam=2$, this places us
exactly at the threshold in Theorem \ref{thm:dynamic}:
$$
\CollCost
=
K(\lg^{1/\tunableparam} n)
=
K\lg^2 n.
$$
The latency bound for this case is $
O(\CollCost^{1/2+\tunableparam}\ln\CollCost)=
O(
\CollCost\ln\CollCost
)$.
Substituting $\CollCost=K\lg^2 n$ gives:
$$
O\left(
K (\lg^2 n)
\ln\left(K\lg^2 n\right)
\right)
=
O\left(
\lg^2 n(\ln\ln n)
\right).
$$
The expected collision cost is even smaller:
$$
O\left(\sqrt{\CollCost}\right)
=
O(\lg n).
$$
Therefore:
$$
\mathcal{M}_{\mathrm{\mainAlgorithmAcrynom}}(n,\CollCost)
=
O\left(
(\lg^2 n)\ln\ln n
\right),
$$
and we note that:
$$
(\lg n)^2\ln\ln n
=
o\left(\frac{n}{\log n}\right).
$$
Thus, if the classical $\Omega(n/\log n)$ lower bound held with a
constant independent of $\CollCost$ even when $\CollCost$ were allowed
to grow with $n$, it would contradict this upper bound. The example
therefore illustrates why that lower bound must be interpreted with
$\CollCost$ held constant while $n$ grows. Thus, the $\Omega(n/\log n)$ latency lower bound cannot be combined
with our $\Omega(\sqrt{\CollCost})$ tradeoff to obtain a single
$\Omega(\sqrt{\CollCost}+n/\log n)$ lower bound with a hidden constant independent of $\CollCost$.

\else

%%%%%%%%%%%%%%%%%%%% Compressed Section 5.2 %%%%%%%%%%%%%%%%%%%%%%%%%

\subsection{The Classical Dynamic Lower Bound for Fixed Per-Collision Cost}

Theorem~6 of Jurdzi\'nski and Stachowiak
\cite{jurdzinski2005probabilistic} gives an
$\Omega(n/\log n)$ bounded-error latency lower bound for dynamic
wakeup when packets have no identifiers, have only local clocks,
and do not know $n$. A direct adaptation of their construction to
our age-based batch-fair class gives the same asymptotic lower bound
when $\CollCost$ is fixed independently of $n$. We state the
adaptation only to make explicit an important point for our model:
the hidden constant and the threshold on $n$ may depend on the chosen
value of $\CollCost$.

\begin{lemma}\label{lem:lb-dynamic-latency}
Fix an age-based batch-fair algorithm $A$ and a per-collision cost
$\CollCost$. There exist constants $c_0>0$ and $n_0$, which may depend
on $A$ and $\CollCost$, such that for every $n\ge n_0$:
$$
\mathcal{L}_A(n,\CollCost)
\ge c_0\frac{n}{\ln n}.
$$
\end{lemma}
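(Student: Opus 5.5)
We adapt the batch-injection idea of Jurdzi\'nski and Stachowiak: keep every slot jammed by a fresh batch, so that no success can occur for $\Omega(n/\ln n)$ slots. First dispose of the degenerate case. If $p_j(\CollCost)=0$ for all $j$, no packet ever transmits and the latency is infinite. Otherwise, let $j$ be the least index with $p=p_j(\CollCost)>0$. Because $A$ and $\CollCost$ are fixed, $j$ and $p$ are constants independent of $n$; this is exactly why $c_0$ and $n_0$ are allowed to depend on $A$ and $\CollCost$.

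The construction is as follows. Set $r=\lceil 3\ln n/p\rceil$ and $T=\lfloor n/(2r)\rfloor$. In each slot $1,\ldots,T$, activate a fresh batch of $2r$ packets, and activate nothing afterward. This uses at most $2rT\le n$ packets. The schedule is fixed in advance, so it is a legal adaptive adversary.

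The key observation concerns any slot $t$ with $j\le t\le T$. The batch activated at slot $t-j+1$ has local age exactly $j$ there. Conditioned on any pre-transmission history reaching $t$, each of its $2r$ packets sends independently with probability $p$. Split this batch into two halves of $r$ packets. If each half contains a transmitter, then at least two packets send and the slot is a collision, whatever the other batches do. Hence the conditional success probability in slot $t$ is at most the probability that some half is silent, which is at most $2(1-p)^r\le 2e^{-pr}\le 2n^{-3}$. Slots before $j$ cannot succeed at all, since every active packet then has age less than $j$ and sends with probability $0$.

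Summing over first-success slots gives $\Pr(L\le T)\le 2T/n^3\le 2/n^2$, so $\mathbb{E}[L]\ge T(1-2/n^2)$. It remains to check the arithmetic. For $n$ beyond a threshold $n_0$ depending on $p$, we have $r\le 4\ln n/p$ and $n/(2r)\ge 2$. These give $T\ge n/(4r)\ge pn/(16\ln n)$, and therefore $\mathcal{L}_A(n,\CollCost)\ge pn/(32\ln n)$. This yields the lemma with $c_0=p/32$.

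The main point needing care is the conditioning argument. The jamming batch's independence and its exact age-$j$ probability hold given any history, so the per-slot bound can be summed over first-success events despite adaptivity. Apart from that, everything is routine.
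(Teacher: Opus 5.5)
Your proposal is correct and follows the paper's proof essentially line for line. It uses the same choices $r=\lceil 3\ln n/p\rceil$ and $T=\lfloor n/(2r)\rfloor$, the same two-halves jamming argument that bounds the per-slot success probability by $2n^{-3}$ for every history, and reaches the same final bound $\mathcal{L}_A(n,\CollCost)\ge pn/(32\ln n)$. You also spell out the intermediate step $T\ge n/(4r)$, which the paper leaves implicit.
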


% Keep the existing proof here under \ifshowproofs ... \fi.

For \mainAlgorithmAcrynom, this dependence on $\CollCost$ is essential.
Its first positive sending probability is
$p_1(\CollCost)=1/w_0$, so the construction underlying the lemma
requires batches of size $\Theta(w_0\ln n)$. In the large-$\CollCost$
regime, $\CollCost\ge K\lg^{1/\tunableparam}n$, our choice of $K$
gives $\CollCost^\tunableparam\ge2\lg n$, and hence $
w_0=2^{\CollCost^\tunableparam}\ge n^2$. Thus, the required batch is larger than the available population.
By contrast, if $\CollCost$ is held constant, then $w_0$ is also
constant and sufficiently large $n$ eventually allows the construction;
however, the condition $\CollCost\ge K\lg^{1/\tunableparam}n$ then
eventually fails. Therefore, the fixed-$\CollCost$
$\Omega(n/\log n)$ lower bound is consistent with our large-$\CollCost$
upper bound and cannot be combined with our
$\Omega(\sqrt{\CollCost})$ tradeoff to obtain a single
$\Omega(\sqrt{\CollCost}+n/\log n)$ lower bound with a hidden constant
independent of $\CollCost$.

\fi
%%%%%%%%%%%%%%%%%%%%%%%%%%%%%%%%%%%%%%%%%%%%%%%%%%%%%%%%%%%%%%%%%%%%%

\section{Conclusion and Future Work}

We have addressed dynamic wakeup with a per-collision cost and no collision detection. Our results show that contention can be controlled even under arbitrary packet activations while keeping both latency and collision cost bounded. 

In the large-$\CollCost$ regime, the $O(\sqrt{\CollCost})$ expected
collision-cost bound matches the scale of our
$\Omega(\sqrt{\CollCost})$ tradeoff. For the maximum of expected
latency and expected collision cost, our upper bound is within a factor
$O(\CollCost^\tunableparam\ln\CollCost)$ of this lower bound for the
algorithm class we consider. It remains open whether this gap can be
reduced or eliminated. It would also be interesting to extend the lower
bound beyond age-based batch-fair algorithms. % and to study whether similarguarantees can be achieved in the presence of adversarial jamming or other forms of channel disruption.
\medskip

\noindent{\bf Acknowledgements/Generative AI Disclosure.} The research problem, model, algorithmic design, and overall proof
strategy presented here belong to the authors. That said, AI was very 
helpful in simplifying and tightening the analysis. We also used AI to
polish the paper, although the presentation and high-level exposition
are our own. We take full responsibility for any errors.

\bibliographystyle{splncs04}
\bibliography{CR}

\end{document}